\documentclass[11pt]{article}
\usepackage[margin=1in]{geometry}
\usepackage{setspace}
\usepackage{graphicx}
\usepackage{svg} 
\graphicspath{{Fig/}}
\usepackage{subcaption}

\usepackage{amsmath}
\usepackage{mathtools} % after amsmath
\usepackage{amsthm}    % after mathtools
\usepackage{amsfonts, amssymb, amsbsy, mathrsfs}
\usepackage{bm}  %or just use \boldsymbol{}
\usepackage[makeroom]{cancel}
\usepackage{physics}

\usepackage[numbers,compress]{natbib} 
\usepackage{hyperref}
\usepackage{cleveref} % must after hyperref
\usepackage{autonum} % must after cleverref

\usepackage{xcolor}

\usepackage{tcolorbox}
\usepackage{dcolumn}
\usepackage{verbatim}
\usepackage{enumerate,enumitem}

\theoremstyle{plain}
\newtheorem{theorem}{Theorem}
\newtheorem{lemma}{Lemma}
\newtheorem{proposition}[lemma]{Proposition}
\newtheorem{definition}{Definition}

\theoremstyle{remark}

\usepackage{authblk,abstract} 
\newcommand{\defeq}{\stackrel{\text{def}}{=}}

\DeclareMathOperator{\supp}{supp}

\newcommand{\cc}{\mathscr{C}}
\newcommand{\calC}{\mathcal{C}}
\newcommand{\calA}{\mathcal{A}}

\newcommand{\calL}{\mathcal{L}}
\newcommand{\calT}{\mathcal{T}}

\usepackage[commandnameprefix=ifneeded]{changes}  
\setauthormarkup{}
\usepackage{xparse}

\NewDocumentCommand{\ZL}{o m}{%
	\IfNoValueTF{#1}%
	{\todo[color=blue!40, inline]{\textbf{Zhi:} #2}}% set default to inline 
	{\todo[color=blue!40, fancyline]{\textbf{Zhi:} #2}}% % default
}
\definechangesauthor[name={ZL}, color=blue]{ZL}

\begin{document}
	\title{\textbf{Explicit States with Two-sided Long-Range Magic}}
	\author[1]{Zhi Li}
	\affil[1]{IBM Research}
	
	\date{\vspace{-5ex}}  % no date
	\maketitle

	\begin{abstract}
		
		Nonstabilizerness, or magic, is a necessary resource for quantum advantage beyond the classically simulatable Clifford framework. Recent works have begun to chart the structure of magic in many-body states, introducing the concepts of long-range magic---nonstabilizerness that cannot be removed by finite-depth local unitary (FDU) circuits---and the magic hierarchy, which classifies quantum circuits by alternating layers of Clifford and FDUs. In this work, we construct explicit states that provably possess two-sided long-range magic, a stronger form of magic meaning that they cannot be prepared by a Clifford circuit and a FDU in either order, thus placing them provably outside the first level of the magic hierarchy. Our examples include the ``magical cat" state, $|\psi\rangle \propto |0^n\rangle + |+^n\rangle$, and ground states of certain nonabelian topological orders. These results provide new examples and proof techniques for circuit complexity, and in doing so, reveal the connection between long-range magic, the structure of many-body phases, and the principles of quantum error correction.
	\end{abstract}
	
	% \vspace{1em} 
	
	\section{Introduction}

	Concepts from quantum information have profoundly shaped our understanding of quantum many-body physics. A prime example is long-range entanglement, the cornerstone for defining topological phases of matter \cite{chen2010local}. Yet, entanglement is insufficient for quantum computational advantage: stabilizer states can be highly entangled yet classically simulatable \cite{gottesman1998heisenberg}. The additional resource enabling universal quantum computation is nonstabilizerness, or magic \cite{bravyi2005universal,veitch2014resource,howard2017application}. This leads to a fundamental question by analogy: can magic itself be organized into a long-range, topologically robust structure?

	Recent works have formalized this analogy, introducing long-range magic (or long-range nonstabilizerness) as nonstabilizerness that cannot be removed by finite-depth local unitary circuits (FDU) \cite{ellison2021symmetry,korbany2025long,wei2025long} (see also \cite{white2021conformal,andreadakis2026exact}).
	This property is naturally defined by the impossibility of state preparation using the circuit class $\mathsf{FDU}\circ\mathsf{Clifford}$ (first Clifford, then FDU). 
	This motivates considering the class $\mathsf{Clifford}\circ\mathsf{FDU}$, which is equally natural from the state preparation perspective.
	In fact, if one seeks a stronger definition of long-range magic such that Clifford postprocessing do not promote short-range magic to long-range, as Clifford is magic-free, then the hardness against $\mathsf{Clifford}\circ\mathsf{FDU}$ circuits becomes another minimal requirement.
	
	These classes are elegantly unified and generalized in the recently proposed magic hierarchy \cite{parham2025quantum}, which classifies circuits via alternating Clifford and FDU layers. 
	The hierarchy’s first level contains both $\mathsf{FDU}\circ\mathsf{Clifford}$ and $\mathsf{Clifford}\circ\mathsf{FDU}$ (termed $\mathsf{A_1CQ}$ and $\mathsf{A_1QC}$ in Ref.~\cite{parham2025quantum}), and higher levels contain further alternations.
	While the magic hierarchy provides a compelling scheme, proving actual lower bounds has been challenging.
	To date, known lower bounds have been limited to the $\mathsf{FDU}\circ\mathsf{Clifford}$ class \cite{korbany2025long,wei2025long,parham2025quantum}.
	Even the other class at the first level, $\mathsf{Clifford}\circ\mathsf{FDU}$, remains largely unexplored and calls for different techniques and observations.

	This raises a natural question: which states are outside the entire first level, that is, cannot be prepared by either $\mathsf{FDU}\circ\mathsf{Clifford}$ or $\mathsf{Clifford}\circ\mathsf{FDU}$\footnote{In this work, for the FDU, we focus on ``all-to-all local" rather than ``geometrically local", since (1) the Clifford would break the geometric structure anyway (2) the proof is usually simpler with geometrical locality.}?
	We call this stronger property \emph{two-sided long-range magic}. 
	In this work, we answer this question by providing explicit constructions for such states.
	
	\begin{enumerate}
		\item \textbf{The ``magical cat" state}. 
		We introduce $|\psi\rangle \propto \ket{0^n}+ \ket{+^n}$, a cat-like state whose branches are product states in conjugate Pauli bases. 
		Despite its simplicity, it cannot be prepared by either $\mathsf{Clifford} \circ \mathsf{FDU}$ or
		$\mathsf{FDU} \circ \mathsf{Clifford}$, even allowing constant approximation error.
		
		\item \textbf{Nonabelian topological orders}. For quantum double and string-net
		models on closed surfaces of genus $g \geq 1$ that admit no locality-preserving logical gates (e.g., transversal gates), we prove that no ground state can be prepared by $\mathsf{Clifford} \circ \mathsf{FDU}$ within inverse-polynomial error. 
		Combined with the $\mathsf{FDU} \circ \mathsf{Clifford}$ direction established in prior works~\cite{wei2025long, parham2025quantum}, we thus establish that such nonabelian topological orders lie outside the first level of the magic hierarchy.
		We also provide explicit examples satisfying the no-locality-preserving-logical-gates condition.
	\end{enumerate}

	Our constructions reveal a compelling parallel with the landscape of long-range entanglement, which features two paradigmatic classes: cat-like states (e.g., the GHZ state), whose entanglement stems from macroscopic superposition, and topologically ordered states (e.g., the toric code), whose entanglement arises from anyonic structure \cite{haah2016invariant,aharonov2018quantum,li2025much} and error correction \cite{bravyi2006lieb,bravyi2025much}. 
	Our examples of two-sided long-range magic mirror this dichotomy: the ``magical cat" state serves as a magic analogue of the GHZ state, while our second class comprises ground states of topological orders, but necessarily nonabelian. 
	Notably, the canonical examples of long-range entanglement---the GHZ state and toric code states---are themselves stabilizer states with no magic. Our constructions show that simple alterations of these templates already suffice for two-sided long-range magic.

	\section{ZX-cat State}
	Consider the following state:
	\begin{equation}\label{eq:psi}
		\ket{\psi}=\frac{1}{\sqrt{2\alpha}}(\ket{0^n}+\ket{+^n}),
	\end{equation}
	where $\alpha=1+2^{-\frac{n}{2}}$ is the normalization constant.
	This state has a cat-like structure, where its two components are product states defined in the computational ($Z$) basis and the Hadamard ($X$) basis, respectively. 
	We thus propose to name it the ZX-cat state. 
	Given its inherent long-range magic, we also colloquially refer to it as the ``magical cat" state\footnote{
		This state belongs to a class of generalized GHZ states in Ref.~\cite{dur2002effective}. It also shares similarity with the state $\ket{+^n}+\ket{\text{cluster}}$ considered in Ref.~\cite{zhang2024long}, to which our results also apply.
	}.
	
	\begin{theorem}
		The state $\ket\psi$ cannot be prepared by $\mathsf{Clifford}\circ\mathsf{FDU}$, or $\mathsf{FDU}\circ\mathsf{Clifford}$, within a constant approximation error. 
	\end{theorem}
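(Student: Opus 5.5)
We treat the two directions separately. In both, write $\epsilon$ for the allowed constant error and $d$ for the FDU depth, so each output qubit has a backward lightcone of size at most $2^d=O(1)$.

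\textbf{Direction $\mathsf{FDU}\circ\mathsf{Clifford}$.} Suppose $\ket\psi\approx U C\ket{0^n}$ with $U$ an FDU and $C$ Clifford. Then $\ket\phi:=U^\dagger\ket\psi$ is $\epsilon$-close to the stabilizer state $C\ket{0^n}$. We exploit the local structure of $\ket\psi$: for any constant-size region $A$ its reduced state is
\begin{equation}
\rho_A\approx \tfrac12\big(\ket{0^{|A|}}\!\bra{0^{|A|}}+\ket{+^{|A|}}\!\bra{+^{|A|}}\big),
\end{equation}
up to $2^{-\Omega(n)}$ cross terms, because $\braket{0^{n-|A|}}{+^{n-|A|}}$ is exponentially small. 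Conjugating by $U^\dagger$ maps local reduced states to local reduced states: the marginal of $\ket\phi$ on a region $B$ is determined by the marginal of $\ket\psi$ on the lightcone of $B$. So every constant-size marginal of $\ket\phi$ is an equal mixture of two pure states, $U^\dagger$ applied to the two branches restricted to the lightcone. Moreover, these two branch states $\ket{a}=U^\dagger\ket{0^n}$ and $\ket{b}=U^\dagger\ket{+^n}$ are globally short-range entangled and have overlap $2^{-n/2}$.

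Next I would use the stabilizer structure. A stabilizer state close to $\tfrac1{\sqrt2}(\ket a+\ket b)$ has, by a Bravyi–Gosset–König or Korbany-type argument, a Pauli stabilizer group whose elements have expectation $\pm1$. Pick a stabilizer $P$ of $C\ket{0^n}$. Its expectation value in $\ket\phi$ is $\tfrac12(\langle a|P|a\rangle+\langle b|P|b\rangle)+$ cross terms. The cross terms are bounded because $\ket a$ and $\ket b$ are locally distinguishable everywhere: $Z$-type versus $X$-type local marginals remain far apart after $U^\dagger$. This forces $|\langle a|P|a\rangle|$ and $|\langle b|P|b\rangle|$ to be close to $1$ with the same sign.

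Now count. The stabilizer group has $2^n$ elements, while $\ket a$ is an FDU image of a product state, so it is near-stabilized only by Paulis that are essentially images of $Z$-strings. Comparing the $Z$-string and $X$-string images under $U^\dagger$, the stabilizer group would need to lie in the intersection of two "near-stabilizer sets" that are incompatible on $\Omega(n)$ qubits. I would make this precise via the stabilizer Rényi entropy or stabilizer fidelity: $\ket\phi$ has local magic on each constant block, since each marginal is a mixture of non-commuting-basis states, and magic additivity over $\Omega(n)$ disjoint lightcone-separated blocks gives stabilizer fidelity $e^{-\Omega(n)}$, contradicting $\epsilon$-closeness. This is the approach of prior long-range-magic works, and I expect it to go through with the cat structure handled by the block decomposition.

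\textbf{Direction $\mathsf{Clifford}\circ\mathsf{FDU}$.} Suppose $\ket\psi\approx C U\ket{0^n}$, so $C^\dagger\ket\psi\approx\ket\chi$ with $\ket\chi$ FDU-preparable. Then $C^\dagger\ket\psi\propto C^\dagger\ket{0^n}+C^\dagger\ket{+^n}$ is a superposition of two stabilizer states $\ket{s_1},\ket{s_2}$ with overlap $2^{-n/2}$. Their stabilizer groups $S_Z$ and $S_X$ intersect only in $\pm I$, since $\langle Z\text{-strings}\rangle\cap\langle X\text{-strings}\rangle$ is trivial. The key fact about FDU states is that they have no long-range correlations: $\langle O_AO_B\rangle\approx\langle O_A\rangle\langle O_B\rangle$ for distant $A,B$, and there are many disjoint lightcone-separated pairs.

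The aim is to find a Pauli $P=C^\dagger Z_iC$, where the Pauli may be nonlocal, together with a product structure showing a "cat" correlation. Observe that $Z_iZ_j$ has expectation $\approx\tfrac12$ in $\ket\psi$ while $Z_i$ has expectation $\approx\tfrac12$; this gives a connected correlation $\tfrac12-\tfrac14=\tfrac14$. More generally, the operator $\Pi=\ket{0^n}\!\bra{0^n}$-like projectors distinguish the branches. The difficulty is that $C^\dagger Z_iC$ is a nonlocal Pauli, so locality of $U$ does not apply directly.

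My plan is to use a counting/pigeonhole step instead of locality. The FDU state $\ket\chi$ has stabilizer-expectation profile $\langle Q\rangle_\chi$ for Paulis $Q$. The $n$ independent generators $C^\dagger Z_iC$ each have expectation about $\tfrac12$ and pairwise products about $\tfrac12$, meaning they are positively "cat-correlated". The same holds for the $X$-generators, and the $Z$- and $X$-families anticommute pairwise in a structured way. I would use the fact that for an FDU state, a set of $m$ Paulis whose supports have bounded overlap yields approximately independent expectations. Since a Pauli on $n$ qubits can be arbitrary, we need a selection argument: among the $n$ Paulis $C^\dagger Z_iC$, pick linear combinations (products) supported on few qubits. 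These exist by Gaussian elimination over $\mathbb{F}_2$, since a stabilizer group always has generators whose supports can be localized after restricting to subregions. Each product of $Z$'s still has expectation $\approx\tfrac12$ in $\ket\psi$. From $\Omega(n)$ disjoint-support such products, approximate independence under $\ket\chi$ forces their joint product to have expectation $\approx 2^{-\Omega(n)}$, whereas in $\ket\psi$ any product of $Z$-strings has expectation $\approx\tfrac12$. This gives a constant discrepancy, contradicting closeness.

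The main obstacle is this last step. Ensuring that enough low-weight, mutually far-apart elements exist in $C^\dagger\langle Z\rangle C$ needs care, because FDU lightcones are arbitrary all-to-all. Making "approximately independent" rigorous for products of many distant operators (an exponential decay of the joint expectation) requires an FDU-specific decoupling argument via lightcone-disjoint blocks and a pure-product reference.
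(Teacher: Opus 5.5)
\textbf{Verdict: there is a genuine gap in both directions.} The $\mathsf{FDU}\circ\mathsf{Clifford}$ argument rests on a cross-term bound that fails for nonlocal Paulis and aims at a false conclusion. The $\mathsf{Clifford}\circ\mathsf{FDU}$ argument has the right endgame, but it fills the localization step with a false claim.

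\textbf{$\mathsf{FDU}\circ\mathsf{Clifford}$.} Two steps fail here.

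First, the cross-term bound you invoke holds only for \emph{local} $P$; for a nonlocal stabilizer it can be of order one. Take the single layer $U=W^{\otimes n}$ with $WXW^\dagger=H$. Since $H^{\otimes n}\ket\psi=\ket\psi$, the Pauli $X^{\otimes n}$ exactly stabilizes $U^\dagger\ket\psi$. Yet $\mel{a}{X^{\otimes n}}{b}=\mel{0^n}{H^{\otimes n}}{+^n}=1$ and $\expval{X^{\otimes n}}{a}=2^{-n/2}$. So the claim that $|\expval{P}{a}|\approx 1$ for every stabilizer $P$ is false.

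Second, your endgame aims at a false statement. $U^\dagger\ket\psi$ need not have stabilizer fidelity $e^{-\Omega(n)}$: already for $U=I$ one has $|\braket{0^n}{\psi}|^2\approx\frac12$. Additivity over blocks is unavailable precisely because the cat structure correlates all blocks. (The local marginals are also mixtures of generally mixed states, not pure ones.)

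The obstruction is of constant size, and it must be extracted from a \emph{local} stabilizer that carries the cat correlation. Here is how the paper does it.
\begin{enumerate}
\item Choose $B_i,B_j$ to be backward light cones with disjoint forward light cones. Data processing gives $I(B_i:B_j)_{\ket S}\ge I(i:j)_{\ket\psi}>0$.
\item Since $\ket S=C\ket{0^n}$ is a stabilizer state, this yields a stabilizer $g=P_i\otimes P_j$ with $P_i,P_j\notin\mathrm{Stab}(\ket S)$.
\item Because $g$ is local, the cross-term argument is legitimate, and $g$ nearly stabilizes both $U^\dagger\ket{0^n}$ and $U^\dagger\ket{+^n}$.
\item Factorization across the disjoint forward light cones gives $P_i\approx\pm1$ on each branch. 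The signs must be opposite, otherwise $P_i$ would stabilize $\ket S$.
\item Hence the branches are nearly orthogonal on $B_i$. This contradicts the data-processing bound $F\ge 2^{-|\calL_f(B_i)|/2}$.
\end{enumerate}
Tolerating constant error needs a further amplification over $\Theta(n)$ well-separated sites. All the $P_i$ agree modulo $\mathrm{Stab}(\ket S)$, so one splits $\ket S$ with $\frac{1\pm P_1}{\sqrt2}$. Exact factorization plus AM--GM then beats a per-site gap that depends only on the depth. An entropy-integrality (Korbany/Parham-type) argument does not by itself give constant-error tolerance for this state.

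\textbf{$\mathsf{Clifford}\circ\mathsf{FDU}$.} Your endgame matches the paper's. A stabilizer element of $C^\dagger\ket{0^n}$, and the product of two such elements, each have expectation $\approx\frac12$ in $C^\dagger\ket\psi$. Exact factorization in $U\ket{0^n}$ instead forces $\approx\frac14$ for the product, and two elements already suffice.

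However, the step you flag as the main obstacle is filled with a false claim. Gaussian elimination does not produce low-weight elements: with high probability a random stabilizer group has no nontrivial element of weight below some constant fraction of $n$. The localization must come from the FDU, via a lemma you are missing. For stabilizer states with $\braket{\eta_1}{\eta_2}\neq0$, every Pauli satisfies $|\mel{\eta_1}{P}{\eta_2}|\in\{0,|\braket{\eta_1}{\eta_2}|\}$. Hence $|\mel{\eta_1}{V}{\eta_2}|\le 2^{|\supp V|}|\braket{\eta_1}{\eta_2}|$ whenever $\norm{V}_\infty=1$.

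Apply this to the exact local stabilizer $V=UZ_iU^\dagger$ of $U\ket{0^n}$, with $\ket{\phi_1}=C^\dagger\ket{0^n}$:
\begin{enumerate}
\item The cross term is $O(2^{a-n/2})$, so $\expval{V}{\phi_1}\approx 1$.
\item Since $\Tr V=0$, $\ket{\phi_1}$ is not maximally mixed on $\supp V\subseteq\calL_f(i)$.
\item Therefore $C^\dagger\langle Z_1,\dots,Z_n\rangle C$ contains a nontrivial element $g$ supported there.
\end{enumerate}
Repeating at a site $i'$ whose light cones are disjoint from those of $i$ gives $g'$. Then $g$, $g'$ and $gg'$ produce the $\frac12$ versus $\frac14$ contradiction, and this argument is robust to constant error.
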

	
	In the following, we prove the two no-go results in \cref{sec:CU,sec:UC} respectively. 
	The proofs are first given for exact preparation; the constant-error case is covered in Sec.~\ref{sec:ZXappro}.
	In \cref{sec:ZX2ndproof}, we provide an alternative proof for $\mathsf{Clifford}\circ\mathsf{FDU}$ via approximate error-correcting codes.
	
	\subsection{$\mathsf{Clifford}\circ\mathsf{FDU}$}\label{sec:CU}
	Suppose by contradiction that $\ket{\psi}=CU\ket{0^n}$, where $C$ is Clifford, $U$ has a finite depth $t=O(1)$.
	Consider $\ket{\phi} \defeq C^\dagger \ket{\psi}$.
	Then, on the one hand,
	\begin{equation}\label{eq:phiSRE}
		\ket{\phi} = U\ket{0^n};
	\end{equation}
	on the other hand,
	\begin{equation}\label{eq:phisumofstab}
		\ket{\phi}=\frac{1}{\sqrt{2\alpha}}(\ket{\phi_1}+\ket{\phi_2}),
	\end{equation}
	where $\braket{\phi_1}{\phi_2}=2^{-\frac{n}{2}}$, and $\ket{\phi_1}$ and $\ket{\phi_2}$ are stabilizer states with stabilizer groups:
	\begin{equation}
		G_1=\expval{C^\dagger Z_i C}, G_2=\expval{C^\dagger X_i C}.
	\end{equation}
	We will draw a contradiction from \cref{eq:phiSRE,eq:phisumofstab}.
	
	To build some intuition, we note that if $C=I$, then \cref{eq:phisumofstab} (which equals \cref{eq:psi} in this case) is a superposition of two ``super-selection sectors" and has long-range correlation, hence contradicting with \cref{eq:phiSRE}.
	For general $C$, the following lemma shows that $\ket{\phi_1}$ and $\ket{\phi_2}$ still belong to different ``super-selection sectors".
	\begin{lemma}\label{lem:twostab}
		Let $\ket{\eta_1}$ and $\ket{\eta_2}$ be two stabilizer states such that $\braket{\eta_2}{\eta_1}\neq 0$.
		
		(1) Let $P$ be a Pauli string, $\mel{\eta_2}{P}{\eta_1}\neq 0$, then $\abs{\braket{\eta_2}{\eta_1}}=\abs{\mel{\eta_2}{P}{\eta_1}}$;
		
		(2) For any operator $V$ such that $\norm{V}_\infty=1$, we have $\abs{\mel{\eta_1}{V}{\eta_2}} \leq 2^{a} \abs{\braket{\eta_1}{\eta_2}}$. Here $a=|\supp(V)|$.
	\end{lemma}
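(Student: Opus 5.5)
Write $\ket{\eta_1}$ as the stabilizer state with stabilizer group $S_1$. The key fact I will use is the standard description of the overlap of two stabilizer states. For $\ket{\eta_1}$ we have the projector
\begin{equation}
	\ket{\eta_1}\bra{\eta_1}=\frac{1}{2^n}\sum_{s\in S_1}s .
\end{equation}
From this, $\abs{\braket{\eta_2}{\eta_1}}^2=\mel{\eta_2}{\Pi_1}{\eta_2}=2^{-n}\sum_{s\in S_1}\expval{s}{\eta_2}$. Each term $\expval{s}{\eta_2}$ is either $\pm1$, when $\pm s\in S_2$, or $0$. The nonzero overlap forces all nonzero terms to equal $+1$, so $\abs{\braket{\eta_2}{\eta_1}}^2=\abs{S_1\cap S_2}/2^n$.

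\textbf{Part (1).} The state $P\ket{\eta_1}$ is again a stabilizer state, with group $PS_1P^\dagger$. This group consists of the same Pauli strings as $S_1$, with signs flipped on those elements that anticommute with $P$. Applying the formula above to the pair $(P\ket{\eta_1},\ket{\eta_2})$ gives $\abs{\mel{\eta_2}{P}{\eta_1}}^2=2^{-n}\abs{\{s\in S_1: \pm s\in S_2\}}$, provided the overlap is nonzero. The right-hand side depends only on the sign-free intersection.

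The same count controls $\abs{\braket{\eta_2}{\eta_1}}^2$, because when that overlap is nonzero all sign-free matches are genuine matches. Hence the two quantities coincide. More directly, both overlaps equal $\sqrt{\abs{\bar S_1\cap \bar S_2}/2^n}$, where the bar denotes the unsigned groups. This holds whenever the respective overlap is nonzero.

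\textbf{Part (2).} Let $A=\supp(V)$ with $\abs{A}=a$. Expand $V$ in the Pauli basis on $A$:
\begin{equation}
	V=\sum_{P}c_P P, \qquad c_P=2^{-a}\operatorname{tr}(P^\dagger V).
\end{equation}
Each coefficient satisfies $\abs{c_P}\le \norm{V}_\infty=1$, since $\abs{\operatorname{tr}(P^\dagger V)}\le \norm{P}_1\norm{V}_\infty=2^a$. Then
\begin{equation}
	\abs{\mel{\eta_1}{V}{\eta_2}}\le\sum_P \abs{c_P}\,\abs{\mel{\eta_1}{P}{\eta_2}}.
\end{equation}
By part (1), each nonzero term is at most $\abs{\braket{\eta_1}{\eta_2}}$. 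This crude count gives a bound of $4^a$ rather than the claimed $2^a$.

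To reach $2^a$, I will use a finer bound on the coefficients. By Cauchy--Schwarz, $\sum_P\abs{c_P}\le 2^a\bigl(\sum_P\abs{c_P}^2\bigr)^{1/2}$. Parseval gives $\sum_P\abs{c_P}^2=2^{-a}\operatorname{tr}(V^\dagger V)\le 1$. Together these yield $\sum_P\abs{c_P}\le 2^a$, and therefore $\abs{\mel{\eta_1}{V}{\eta_2}}\le 2^a\abs{\braket{\eta_1}{\eta_2}}$.

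\textbf{Main obstacle.} The delicate step is the sign bookkeeping in part (1). One must confirm that a nonzero $\mel{\eta_2}{P}{\eta_1}$ and a nonzero $\braket{\eta_2}{\eta_1}$ both reduce to the same unsigned intersection count. Equivalently, both states lie in the same joint eigenspace structure of $\bar S_1\cap\bar S_2$, and $P$ only permutes its sectors.
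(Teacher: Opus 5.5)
Your proposal is correct and follows essentially the paper's route: part (1) uses the same expansion $\ket{\eta}\bra{\eta}=2^{-n}\sum_{g}g$, with the signs forced to $+1$ by the nonzero overlap. The paper reduces to $\ket{\eta_2}=\ket{0^n}$, proves one inequality, and symmetrizes by replacing $\ket{\eta_1}$ with $P\ket{\eta_1}$; you instead compute both sides directly as the same unsigned intersection count, as in the paper's Remark. Part (2) is the paper's argument verbatim: Pauli expansion on $\supp(V)$, Parseval for $\sum\abs{c_P}^2\le 1$, and Cauchy--Schwarz over at most $4^a$ terms.
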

	\begin{proof}
		(1) It suffices to prove $|\mel{\eta_2}{P}{\eta_1}|\leq |\braket{\eta_2}{\eta_1}|$ (the inequality in the other direction follows from replacing $\ket{\eta_1}$ by $P\ket{\eta_1}$).
		In the following, we assume without loss of generality that $\ket{\eta_2}=\ket{0^n}$ and denote $\ket{\eta_1}$ by $\ket\eta$.
		We have
		\begin{equation}
			|\braket{0^n}{\eta}|^2=\braket{0^n}{\eta}\braket{\eta}{0^n}
			=\frac{1}{2^n}\sum_{g\in G} \expval{g}{0^n} 
			=\frac{1}{2^n}\sum_{g\in G_Z} \expval{g}{0^n}. 
		\end{equation}
		Here $G$ is the stabilizer group of $\ket{\eta}$ and $G_Z$ is the subgroup of $G$ that contains only tensor products of Pauli $Z$s (and the identity).
		In the last summation, each term must be $\pm 1$ a priori.
		However, in fact, each term must be +1;
		otherwise, we would have $g\ket{0^n}=-\ket{0^n}$, which implies $\braket{0^n}{\eta}=0$ (since $g\ket{\eta}=+\ket{\eta}$) and hence contradicts our assumption.
		Therefore,
		\begin{equation}
			|\braket{0^n}{\eta}|^2 = \frac{1}{2^n}|G_Z|.
		\end{equation}
		Now, noticing that conjugating a Pauli operator by another Pauli operator at most introduces a scalar factor, it follows that
		\begin{equation}
			|\mel{0^n}{P}{\eta}|^2
			=\braket{0^n}{P\eta}\braket{P\eta}{0^n}
			=\frac{1}{2^n}\sum_{g\in G_Z} \expval{PgP^\dagger}{0^n} 
			\leq \frac{1}{2^n}|G_Z|
			=|\braket{0^n}{\eta}|^2.
		\end{equation}
		
		(2) We decompose $V$ into a sum of Pauli strings: $V=\sum\lambda_i P_i$.
		There are at most $4^{a}$ terms in the expansion.
		$\norm{V}_\infty= 1$ implies $\sum |\lambda_i|^2 \leq 1$, since $2^n \sum |\lambda_i|^2 = \norm{V}_2^2\leq 2^n$.
		% Unitarity of .
		Hence, by (1) and Cauchy-Schwarz,
		\begin{equation}
			|\mel{\eta_1}{V}{\eta_2}| \leq \sum |\lambda_i\mel{\eta_1}{P_i}{\eta_2}| 
			\leq \sum |\lambda_i|\cdot |\braket{\eta_1}{\eta_2}|
			\leq 2^{a}|\braket{\eta_1}{\eta_2}|.  \tag*{\qedhere}
		\end{equation}
	\end{proof}
	
	\textbf{Remark}: The lemma also follows from the fact that, for two stabilizer states $\ket{\eta_i}$ with stabilizer groups corresponding to linear subspaces $S_i\subseteq \mathbb{F}_2^{2n}$, we have either $\braket{\eta_1}{\eta_2}=0$ or 
	\begin{equation}
		\abs{\braket{\eta_1}{\eta_2}}^2 = 2^{-n+\dim S_1\cap S_2}.
	\end{equation}
	
	As a result of \cref{lem:twostab}, a local ``stabilizer"\footnote{We use the quotation mark since $\ket{\phi}$ is not necessarily a stabilizer state in the ordinary sense, and the operator is not necessarily a Pauli string.} of $\ket{\phi}$ must also nearly stabilize $\ket{\phi_{1,2}}$. 
	\begin{lemma}\label{lem:componentwise}
		Let $\ket{\phi}=\frac{1}{\sqrt{2\alpha}}(\ket{\phi_1}+\ket{\phi_2})$ be as above.
		Let $V$ be Hermitian, $\norm{V}_\infty = 1$, $a=|\supp(V)|$.
		If $V\ket{\phi}=\ket{\phi}$, then
		\begin{equation}\label{eq:componentwise}
			\expval{V}{\phi_1},~\expval{V}{\phi_2}= 1 - O(2^{a-\frac{n}{2}}).
		\end{equation}
	\end{lemma}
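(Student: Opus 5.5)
We expand the identity $\expval{V}{\phi}=1$ (which holds since $V\ket\phi=\ket\phi$ and $\ket\phi$ is normalized) in terms of the two stabilizer components, and then control the cross terms using \cref{lem:twostab}(2). Concretely,
\begin{equation}
	1=\expval{V}{\phi}=\frac{1}{2\alpha}\Big(\expval{V}{\phi_1}+\expval{V}{\phi_2}+2\,\mathrm{Re}\mel{\phi_1}{V}{\phi_2}\Big).
\end{equation}
Since $\braket{\phi_1}{\phi_2}=2^{-\frac{n}{2}}\neq 0$, \cref{lem:twostab}(2) applies with $\ket{\eta_1}=\ket{\phi_1}$ and $\ket{\eta_2}=\ket{\phi_2}$. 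It gives $\abs{\mel{\phi_1}{V}{\phi_2}}\leq 2^{a}\cdot 2^{-\frac{n}{2}}$. Using $2\alpha=2+2\cdot 2^{-\frac n2}$, we then get
\begin{equation}
	\expval{V}{\phi_1}+\expval{V}{\phi_2}= 2\alpha-2\,\mathrm{Re}\mel{\phi_1}{V}{\phi_2} \geq 2-2^{a+1-\frac{n}{2}}.
\end{equation}

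Next we use that $V$ is Hermitian with $\norm{V}_\infty=1$. Each of $\expval{V}{\phi_1}$ and $\expval{V}{\phi_2}$ is therefore real and at most $1$. If their sum is at least $2-\epsilon$ with $\epsilon=2^{a+1-\frac n2}$, then each term individually is at least $1-\epsilon$. Combined with the upper bound $1$, this gives $\expval{V}{\phi_i}=1-O(2^{a-\frac n2})$ for $i=1,2$, which is \cref{eq:componentwise}.

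The only real step is the cross-term bound, and it is exactly what \cref{lem:twostab} was designed to supply. The one point to check is that its hypothesis $\braket{\phi_2}{\phi_1}\neq 0$ holds; this is guaranteed because $\braket{\phi_1}{\phi_2}=\braket{0^n}{+^n}=2^{-\frac n2}$, which is preserved under the Clifford $C^\dagger$. Everything else is bookkeeping with the normalization $\alpha$. No eigenvalue argument beyond $V\leq \norm{V}_\infty$ is needed.
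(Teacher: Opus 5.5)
Your proof is correct and matches the paper's argument: expand $\expval{V}{\phi}=1$ into diagonal and cross terms, bound the cross term by $2^{a-\frac{n}{2}}$ via \cref{lem:twostab}(2), and use $\expval{V}{\phi_i}\leq 1$ to bound each component from below. The paper isolates $\expval{V}{\phi_1}=2\alpha-\expval{V}{\phi_2}-2\Re\mel{\phi_1}{V}{\phi_2}$ directly rather than first bounding the sum, but this is the same bookkeeping.
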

	\begin{proof}
		We have:
		\begin{equation}
			1=\expval{V}{\phi}
			= \frac{1}{2\alpha}\left( \expval{V}{\phi_1}+\expval{V}{\phi_2} +2\Re\mel{\phi_1}{V}{\phi_2}    \right).
		\end{equation}
		Due to \cref{lem:twostab}, $|\mel{\phi_1}{V}{\phi_2}|\leq 2^{a-\frac{n}{2}}$, hence: 
		\begin{equation}
			\begin{aligned}
				1\geq {\expval{V}{\phi_1}} 
				={2\alpha-\expval{V}{\phi_2}-2\Re\mel{\phi_1}{V}{\phi_2}}
				\geq 1+ 2^{1-\frac{n}{2}}-2^{1+a-\frac{n}{2}},
			\end{aligned}
		\end{equation}
		which implies \cref{eq:componentwise} for $\ket{\phi_1}$.
		The claim for $\ket{\phi_2}$ follows symmetrically.
	\end{proof}
	
	\begin{figure}
		\centering
		\includegraphics[width=0.6\linewidth]{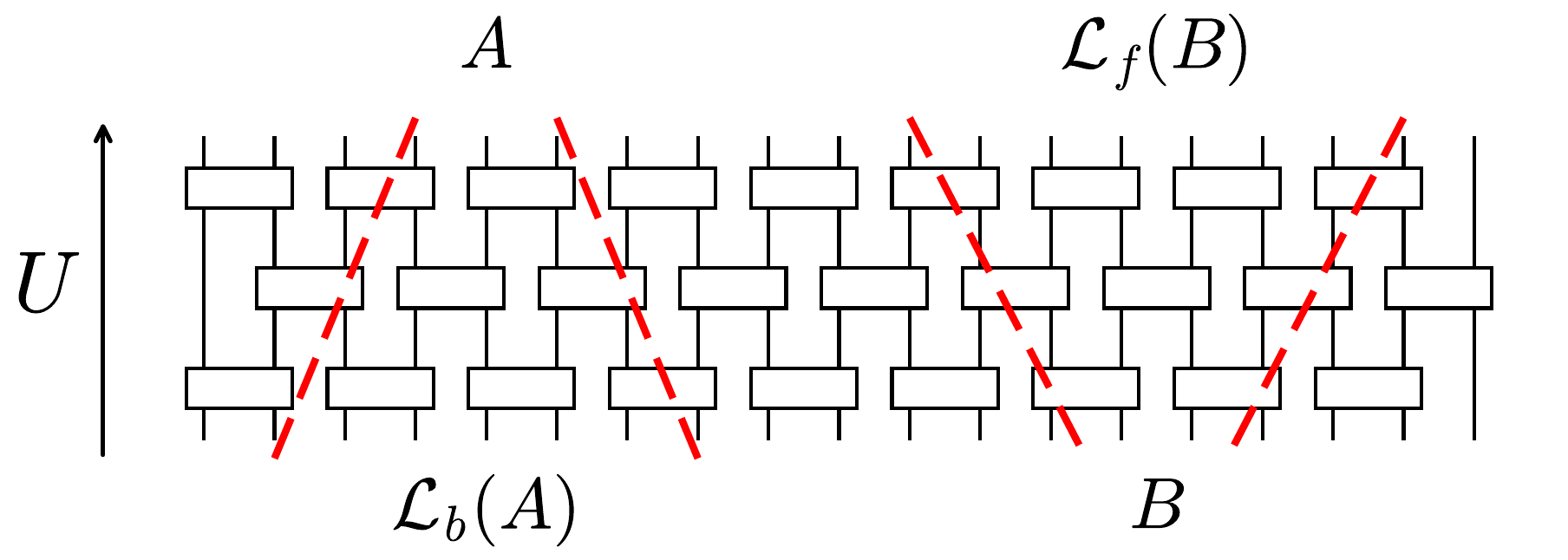}
		\caption{The backward light cone $\calL_b$ and the forward light cone $\calL_f$.}
		\label{fig:lightcone}
	\end{figure}
	
	Now consider the operator $V=UZU^\dagger$, where $Z$ is a single-qubit Pauli $Z$ operator.
	Due to the locality of the circuit $U$, $\supp(V)\subseteq \calL_f(\supp(Z))$, the forward light cone of $\supp(Z)$, see \cref{fig:lightcone}.
	Hence $a=|\supp(V)|\leq 2^t$.
	We also have $\expval{V}{\phi}=1$ due to \cref{eq:phiSRE}, 
	hence \cref{lem:componentwise} implies:
	\begin{equation}
		\expval{V}{\phi_1} = 1 - O(2^{a-\frac{n}{2}}).
	\end{equation}
	It follows that at least one stabilizer of $\ket{\phi_1}$ must be fully supported in $\supp(V)$: otherwise, $\ket{\phi_1}$ reducing to $\supp(V)$ must be maximally mixed, implying $\expval{V}{\phi_1} \propto \Tr(V) =0$. 
	We denote this stabilizer operator as $g$ ($g\neq I$), then:
	\begin{equation}\label{eq:expvalg}
		\expval{g}{\phi} 
		= \frac{1}{2\alpha}\left( \expval{g}{\phi_1}+\expval{g}{\phi_2} +2\Re\mel{\phi_1}{g}{\phi_2}    \right)
		= \frac{1}{2} + O(2^{a-\frac{n}{2}}).
	\end{equation}
	Here, we use $\expval{g}{\phi_2}=0$ (following from $G_1\cap G_2=\{I\}$) and \cref{lem:twostab} again.
	
	Repeating the above argument for another single-qubit Pauli operator $Z'$, we find another stabilizer $g'$ of $\ket{\phi_1}$, such that $\supp(g')\subseteq \calL_f(\supp(Z'))$ and
	\begin{equation}\label{eq:expvalgp}
		\expval{g'}{\phi} = \frac{1}{2}  + O(2^{a-\frac{n}{2}}).
	\end{equation}
	Moreover, we can choose $Z'$ so that the backward light cones $\calL_b(\supp(g))$ and $\calL_b(\supp(g'))$ are disjoint.
	(This can be done since $t=O(1)$.)
	Note that $gg'$ is also a stabilizer of $\ket{\phi_1}$, and $gg'\neq I$ due to the empty overlap of the supports, hence:
	\begin{equation}\label{eq:expvalggp}
		\expval{gg'}{\phi} = \frac{1}{2}  + O(2^{2a-\frac{n}{2}}).
	\end{equation}
	
	However, \cref{eq:expvalg,eq:expvalgp,eq:expvalggp} are incompatible.
	More precisely, due to our choice of two regions and the finite-depth assumption on $U$, there must be no correlation between $g$ and $g'$ in $\ket{\phi}$:
	\begin{equation}
		\expval{gg'}{\phi} = \expval{g}{\phi}\cdot\expval{g'}{\phi}= \frac{1}{4} + O(2^{a-\frac{n}{2}}).
	\end{equation}
	This is in contradiction with \cref{eq:expvalggp}.

	\subsection{$\mathsf{FDU}\circ\mathsf{Clifford}$}\label{sec:UC}

	We assume by contradiction that $|\psi\rangle = UC\ket{0^n}$ where $C$ is a Clifford and $U$ is a depth-$t$ circuit.
	We denote\footnote{Here we slightly abuse the notation; they are different from those defined in \cref{sec:CU}.} $\ket{\phi} = U^\dagger \ket{\psi}$, $\ket{\phi_1}= U^\dagger \ket{0^n}$, $\ket{\phi_2}= U^\dagger \ket{+}^{\otimes n}$, and $\ket{S}=C\ket{0^n}$.
	Therefore, on the one hand, $\ket{S}$ is a stabilizer state;
	on the other hand, 
	\begin{equation}
		\ket{S}= \ket{\phi} = \frac{1}{\sqrt{2\alpha}}(\ket{\phi_1}+ \ket{\phi_2}).
	\end{equation}

	As $\ket{\phi_{1,2}}$ are macroscopically different, the result of \cref{lem:componentwise} still applies, namely, local stabilizers of $\ket{S}$ also act almost trivially on $\ket{\phi_{1,2}}$.
	\begin{lemma}\label{lem:lifting}
		Let $g \in \mathrm{Stab}(\ket{S})$ be a Pauli stabilizer of $\ket{S}$, then 
		\begin{equation}\label{eq:lifting}
			\expval{g}{\phi_1},~\expval{g}{\phi_2}= 1 - O(2^{\frac{a-n}{2}}).
		\end{equation}
		Here, $a=|\calL_f(\mathrm{supp}(g))|$ where $\calL_f(\cdot)$ denotes the forward light cone, see \cref{fig:lightcone}. 
	\end{lemma}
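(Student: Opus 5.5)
The plan is to mimic the argument of \cref{lem:componentwise}, with one change: the cross term is bounded directly using the product structure of $\ket{0^n}$ and $\ket{+^n}$, rather than through \cref{lem:twostab}. This is what gives the sharper exponent $\frac{a-n}{2}$.

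\textbf{Step 1: expand the expectation value.} Since $g$ is a stabilizer of $\ket{S}$, it is a Hermitian Pauli string with $g\ket{S}=\ket{S}$ and $\norm{g}_\infty=1$. Using $\ket{S}=\frac{1}{\sqrt{2\alpha}}(\ket{\phi_1}+\ket{\phi_2})$, I write
\begin{equation}
1=\expval{g}{S}=\frac{1}{2\alpha}\left(\expval{g}{\phi_1}+\expval{g}{\phi_2}+2\Re\mel{\phi_1}{g}{\phi_2}\right).
\end{equation}
Each diagonal term is at most $1$ in absolute value. So once the cross term is controlled, rearranging as in \cref{lem:componentwise} gives
\begin{equation}
\expval{g}{\phi_1}\geq 2\alpha-1-2\abs{\mel{\phi_1}{g}{\phi_2}}\geq 1-2\abs{\mel{\phi_1}{g}{\phi_2}},
\end{equation}
and symmetrically for $\ket{\phi_2}$. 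Together with the trivial upper bound $\expval{g}{\phi_i}\leq 1$, this reduces the lemma to showing $\abs{\mel{\phi_1}{g}{\phi_2}}=O(2^{\frac{a-n}{2}})$.

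\textbf{Step 2: bound the cross term.} By definition of $\ket{\phi_{1,2}}$,
\begin{equation}
\mel{\phi_1}{g}{\phi_2}=\mel{0^n}{W}{+^n},\qquad W\defeq UgU^\dagger .
\end{equation}
Because $U$ has depth $t$, $W$ is supported on a set $A$ with $A\subseteq\calL_f(\supp(g))$, so $|A|\leq a$; also $\norm{W}_\infty=1$. Both $\ket{0^n}$ and $\ket{+^n}$ are product states, so the matrix element factorizes:
\begin{equation}
\mel{0^n}{W}{+^n}=\mel{0_A}{W}{+_A}\cdot\braket{0}{+}^{\,n-|A|}.
\end{equation}
The first factor has modulus at most $\norm{W}_\infty=1$, and the second equals $2^{-\frac{n-|A|}{2}}\leq 2^{\frac{a-n}{2}}$. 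Substituting into Step 1 gives $\expval{g}{\phi_i}\geq 1-2^{1+\frac{a-n}{2}}$, which is \cref{eq:lifting}.

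\textbf{Main obstacle.} The only real point is Step 2. Invoking \cref{lem:twostab}(2) for the stabilizer pair $\ket{0^n},\ket{+^n}$ would only give $2^{a-\frac{n}{2}}$. The factorization over the light cone is what yields the stated $2^{\frac{a-n}{2}}$. Either bound suffices whenever $a=O(1)$.
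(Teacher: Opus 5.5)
Your proof is correct and is essentially the paper's argument. The paper sets $V=UgU^\dagger$ and reruns the expansion of \cref{lem:componentwise} on $\ket{\psi}=\frac{1}{\sqrt{2\alpha}}(\ket{0^n}+\ket{+^n})$, which is the same computation as yours, with the cross term $\mel{\phi_1}{g}{\phi_2}=\mel{0^n}{UgU^\dagger}{+^n}$. Your explicit product-state factorization of that cross term is what actually gives the stated exponent $\frac{a-n}{2}$; applying \cref{lem:twostab}(2) literally, as the paper's ``same argument'' suggests, only gives $2^{a-\frac{n}{2}}$, which is harmless since $a=O(1)$.
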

	\begin{proof}
		Define the operator $V = UgU^\dagger$, then 
		$V\ket{\psi}=\ket{\psi}$ and $\supp(V)\subseteq \calL_f(\supp(g))$.
		Using the same argument for \cref{lem:componentwise}, we have $\expval{V}{0^n}=1 - O(2^{\frac{a-n}{2}})$.
		The same equation holds for $\expval{V}{+^n}$.
		They are exactly \cref{eq:lifting}.
	\end{proof}
	
	\begin{lemma}\label{lem:factor}
		Let $A, B$ be two sets of qubits such that their forward light cones are disjoint. Let $P_A, P_B$ be Hermitian operators on $A, B$ respectively, such that $\norm{P_A}_\infty=\norm{P_B}_\infty=1$. If $(P_A \otimes P_B)\ket{\phi_1} \approx \ket{\phi_1}$ up to a small enough $\epsilon$ error, then $P_A\ket{\phi_1} \approx \pm \ket{\phi_1}$ and $P_B\ket{\phi_1} \approx \pm \ket{\phi_1}$, both up to $\epsilon$.
		Same conclusion holds for $\ket{\phi_2}$.
	\end{lemma}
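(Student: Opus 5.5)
Recall $\ket{\phi_1}=U^\dagger\ket{0^n}$. The plan is to move to the picture where $\ket{\phi_1}$ is a product state, use that product state to factorize the expectation value, and then convert the near-$\pm1$ expectations back into near-eigenvector statements.

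\textbf{Step 1 (transfer through $U$).} Set $\tilde P_A=UP_AU^\dagger$ and $\tilde P_B=UP_BU^\dagger$. Then $\supp(\tilde P_A)\subseteq\calL_f(A)$ and $\supp(\tilde P_B)\subseteq\calL_f(B)$, and these two regions are disjoint by hypothesis. Both operators are Hermitian with operator norm $1$, and $\tilde P_A\tilde P_B=U(P_A\otimes P_B)U^\dagger$. The hypothesis therefore becomes $\tilde P_A\tilde P_B\ket{0^n}\approx\ket{0^n}$ up to $\epsilon$.

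\textbf{Step 2 (factorization).} The state $\ket{0^n}$ is a product state and the two operators act on disjoint sets of qubits. Hence
\[
x_A x_B\;\defeq\;\expval{\tilde P_A}{0^n}\,\expval{\tilde P_B}{0^n}\;=\;\expval{\tilde P_A\tilde P_B}{0^n}\;\geq\;1-\epsilon,
\]
using $\Re\braket{0^n}{w}\ge 1-\norm{w-\ket{0^n}}$. We have $x_A,x_B\in[-1,1]$ since the operators are Hermitian with norm $1$. It follows that $|x_A|,|x_B|\geq 1-\epsilon$, and that $x_A$ and $x_B$ share the same sign $s=\pm1$.

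\textbf{Step 3 (from expectation to eigenvector).} For a Hermitian $W$ with $\norm{W}_\infty\le1$ and $\expval{W}{v}=s(1-\delta)$, I will expand
\[
\norm{(W-s)v}^2=\expval{W^2}{v}-2s\expval{W}{v}+1\le 2-2(1-\delta)=2\delta .
\]
Applied to $\tilde P_A$ and $\tilde P_B$, this gives $\tilde P_A\ket{0^n}\approx s\ket{0^n}$ and $\tilde P_B\ket{0^n}\approx s\ket{0^n}$, each up to $\sqrt{2\epsilon}$. Conjugating back by the unitary $U$ preserves these distances, so $P_A\ket{\phi_1}\approx\pm\ket{\phi_1}$ and $P_B\ket{\phi_1}\approx\pm\ket{\phi_1}$. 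The same argument handles $\ket{\phi_2}=U^\dagger\ket{+^n}$, since $\ket{+^n}$ is also a product state.

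The main subtlety is error bookkeeping. The eigenvector conversion naturally produces $O(\sqrt\epsilon)$, not $\epsilon$. The statement's phrase ``up to $\epsilon$'' should be read as controlled by $\epsilon$, or else measured in expectation value: $\expval{P_A}{\phi_1}=\pm(1-O(\epsilon))$, which is the form actually used alongside \cref{lem:lifting}. I would state the lemma in that expectation-value form to get a clean linear bound. The ``small enough $\epsilon$'' condition ($\epsilon<1$) is what guarantees the common sign $s$ is well defined.
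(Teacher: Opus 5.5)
Your argument is the paper's: conjugate by $U$, factorize $\expval{\tilde P_A\tilde P_B}{0^n}=x_Ax_B$ on the product state, extract a common sign $s$, and pull back by $U^\dagger$. Every step you wrote is correct. The only discrepancy is $\sqrt{2\epsilon}$ versus $\epsilon$, and you do not need to reinterpret the statement to remove it.

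The square root is lost in Step 2. There, $\mathrm{Re}\braket{v}{w}\ge 1-\norm{w-v}$ gives only $1-x_Ax_B\le\epsilon$. The paper instead asserts $\expval{Q_A\otimes Q_B}{0^n}\ge 1-O(\epsilon^2)$. This is exact when the operators square to the identity, as the Paulis $P_i,P_j$ to which the lemma is applied do. In that case $\norm{(W-1)v}^2=2\bigl(1-\expval{W}{v}\bigr)$, so $1-x_Ax_B\le\epsilon^2/2$, and your Step 3 then returns $\norm{(\tilde P_A-s)\ket{0^n}}\le\epsilon$.

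For general Hermitian norm-one operators, as literally stated, that $O(\epsilon^2)$ step is not valid. A counterexample is $U=I$, $P_A=\mathrm{diag}(1-\epsilon,1)$ on one qubit, and $P_B=I$. A linear bound still follows, however, by using the product structure more fully. Let $v_A,v_B$ denote the restrictions of $\ket{0^n}$ to $\calL_f(A)$ and $\calL_f(B)$, and write $\tilde P_A v_A=a\,v_A+b\,w_A$ and $\tilde P_B v_B=a'v_B+b'w_B$, with $w_A\perp v_A$, $w_B\perp v_B$, and $a,a'$ real. The four components of $(\tilde P_A\tilde P_B-1)\ket{0^n}$ are then mutually orthogonal, so $(1-aa')^2+a'^2|b|^2+a^2|b'|^2+|b|^2|b'|^2\le\epsilon^2$. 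This gives $\norm{(\tilde P_A-s)\ket{0^n}}^2=(1-|a|)^2+|b|^2\le\epsilon^2\bigl(1+(1-\epsilon)^{-2}\bigr)$.

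In any case, the $\sqrt{\epsilon}$ loss would be harmless in the paper's applications, where the errors are $e^{-\Omega(n)}$ or small constants.
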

	\begin{proof}
		Let $Q_A = UP_A U^\dagger$ and $Q_B = UP_B U^\dagger$, which are supported on $\calL_f(A)$ and $\calL_f(B)$ respectively. 
		We have $(Q_A \otimes Q_B)\ket{0^n} \approx \ket{0^n}$ up to $\epsilon$. 
		Since $Q_A, Q_B$ act on disjoint qubits, we have
		\begin{equation}
			\expval{Q_A}{0^{n}} \expval{Q_B}{0^{n}} = \expval{Q_A\otimes Q_B}{0^n} \geq 1 -O(\epsilon^2).
		\end{equation}
		This implies either $\expval{Q_A}{0^n},\expval{Q_B}{0^n} \geq 1 -O(\epsilon^2)$, or $\expval{Q_A}{0^n},\expval{Q_B}{0^n} \leq -1 + O(\epsilon^2)$.
		In the former case, $Q_A\ket{0^n} \approx \ket{0^n}$ and $Q_B\ket{0^n} \approx \ket{0^n}$, both up to $\epsilon$.
		In the latter case, we get an extra minus sign.
		Acting by $U^\dagger$ returns the desired relations for $P_A, P_B$ and $\ket{\phi_1}$.
	\end{proof}

	We choose qubits $i$ and $j$ with backward light cones $B_i, B_j$ such that their forward light cones are disjoint: $\calL_f(B_i) \cap \calL_f(B_j) = \emptyset$.
	We can estimate the mutual information $I(B_i,B_j)_{\ket S}$ as follows:
	\begin{equation}\label{eq：MI}
		I(B_i,B_j)_{\ket S} 
		\geq
		I(i,j)_{U\ket S}
		= 
		I(i,j)_{\ket \psi} 
		=\frac{3}{4}\log_2 3+1-\frac{\sqrt{2}\mathrm{arctanh}\frac{2 \sqrt{2}}{3} }{2\log 2}  -e^{-\Omega(n)}
		>0. 
	\end{equation}
	Here, the first step is due to the data-processing inequality, and the third step is by explicit calculation.
	Since $\ket{S}$ is a stabilizer state, nonzero mutual information implies that there exists a Pauli stabilizer $g=P_i \otimes P_j$, supported on $B_i \cup B_j$, such that $P_i$ and $P_j$ are not stabilizers of $\ket{S}$.
	
	We will draw a contradiction as $n\to\infty$ for any fixed $t$, the depth of $U$.
	The sizes of the relevant light cones are bounded by $n$-independent constants, and we omit their dependence in the following for convenience (for example, $2^{a-n}$ will be $e^{-\Omega(n)})$).
	
	By Lemma~\ref{lem:lifting}, $g$ must stabilize both $\ket{\phi_1}$ and $\ket{\phi_2}$, up to $e^{-\Omega(n)}$:
	\begin{align}
		g\ket{\phi_1} \approx \ket{\phi_1},~~~~
		g\ket{\phi_2} \approx \ket{\phi_2}.
	\end{align}
	By Lemma~\ref{lem:factor}, we have:
	\begin{align}\label{eq:Pphi0}
		P_i\ket{\phi_1} \approx \ket{\phi_1},~~~
		P_j\ket{\phi_1} \approx \ket{\phi_1}.
	\end{align}
	Here, to fix the sign, we can replace $P_i$ and $P_j$ by $-P_i$ and $-P_j$ if necessary.
	Similarly,
	\begin{align}\label{eq:Pphi+}
		P_i\ket{\phi_2} \approx -\ket{\phi_2},~~~~
		P_j\ket{\phi_2} \approx -\ket{\phi_2}.
	\end{align}
	Note that we no longer have the freedom to change the signs in \cref{eq:Pphi+} since we might already have done that in \cref{eq:Pphi0}.
	In fact, the signs in \cref{eq:Pphi+} must be $-1$.
	Otherwise, $P_i$ would act as an identity on both $\ket{\phi_1}$ and $\ket{\phi_2}$, so $P_i$ must also stabilize their superposition $\ket{S}$. 
	This contradicts our choice of $g$, which requires that $P_i$ is not a stabilizer of $\ket{S}$.
	
	\Cref{eq:Pphi0,eq:Pphi+} imply that $\ket{\phi_1}$ and $\ket{\phi_2}$ are nearly orthogonal on $B_i$.
	More precisely, defining $\rho_{1,2}=\Tr_{B_i^c}(\ketbra{\phi_{1,2}})$, then 
	\begin{equation}
		F(\rho_1,\rho_2)=e^{-\Omega(n)}.
	\end{equation}
	On the other hand, recalling $\ket{\phi_1}= U^\dagger \ket{0^n}$, $\ket{\phi_2}= U^\dagger \ket{+}^{\otimes n}$, data processing inequality then implies that 
	\begin{equation}\label{eq:DPIF}
		F(\rho_1,\rho_2) \geq F(\ketbra{0^n}_{\calL_f(B_i)}, \ketbra{+^n}_{\calL_f(B_i)})=2^{-\frac{1}{2}|\calL_f(B_i)|}.
	\end{equation}
	This is a contradiction, as the r.h.s. is a $n$-independent constant.

	\subsection{Approximated Case}\label{sec:ZXappro}
	The proof for $\mathsf{Clifford}\circ\mathsf{FDU}$ in \cref{sec:CU} can easily handle the constant approximation error, so we focus on the $\mathsf{FDU}\circ\mathsf{Clifford}$ case.
	
	We follow \cref{sec:UC} closely and define $\ket{\phi}$, $\ket{\phi_{1,2}}$, and $\ket{S}=C\ket{0^n}$ same as above. $\ket{S}$ is still a stabilizer state. 
	The only difference is that now $\ket{S}\approx \ket{\phi}$, up to $\delta=O(1)$.
	
	We pick two qubits $i$ and $j$ similarly.
	Then, as a robust version of \cref{eq：MI}, we have:
	\begin{equation}
		I(B_i,B_j)_{\ket S} 
		\geq I(i,j)_{U\ket S}
		\approx I(i,j)_{\ket \psi} 
		>0. 
	\end{equation}
	Importantly, the dimension-dependent factor in the Fannes–Audenaert inequality does not matter in the second step, as we are only concerning qubits $i$ and $j$.
	Therefore, we still have the Pauli stabilizer $g=P_i \otimes P_j\in \mathrm{Stab}(\ket{S})$ such that $P_i,P_j\notin \mathrm{Stab}(\ket{S})$. 
	
	By a robust version of \cref{lem:lifting}, $g$ must approximately stabilize both $\ket{\phi_1}$ and $\ket{\phi_2}$:
	\begin{align}
		g\ket{\phi_1} \approx \ket{\phi_1},~~~~
		g\ket{\phi_2} \approx \ket{\phi_2}.
	\end{align}
	Here and in the following, the approximation errors are $O(\delta)+e^{-\Omega(n)}$, which is $O(\delta)$.
	By Lemma~\ref{lem:factor}, we have:
	\begin{equation}\label{eq:Pphiapprox}
		\begin{aligned}
			P_i\ket{\phi_1} \approx \ket{\phi_1},~~~~&P_j\ket{\phi_1} \approx \ket{\phi_1},\\
			P_i\ket{\phi_2}\approx -\ket{\phi_2},~~~~&P_j\ket{\phi_2}\approx - \ket{\phi_2}.
		\end{aligned}
	\end{equation}
	Here in the second line, we have similarly ruled out the possibility of $+1$ signs.
	
	Now let us pick $m$ qubits such that their backward light cones $\{B_i\}$ have mutually disjoint forward light cones $\{\calL_f(B_i)\}$.
	We can always achieve $m=\Theta(n)$ (more precisely , $m=\frac{n}{2^{4t}}$) via a greedy algorithm.
	We repeat the argument for each pair $(i, j)$ and get Pauli operators $P_i^{(j)}$ and $P_j^{(i)}$.
	A priori, $P_i$ also depends on the choice of $j$ and vice versa, so we have the superscripts.
	However, we claim that $P_i^{(j)}$ is the same Pauli (modulo stabilizers of $\ket{S}$) for all $j$.
	Indeed, for any $j\ne j'$, it follows from \cref{eq:Pphiapprox} that $P_i^{(j)}P_i^{(j')}$ approximately stabilizes both $\ket{\phi_1}$ and $\ket{\phi_2}$, hence approximately stabilizes $\ket{S}$. 
	Since $P_i^{(j)}P_i^{(j')}$ is Pauli, and $\ket{S}$ is a stabilizer state, approximate stabilization implies exact stabilization: $P_i^{(j)}P_i^{(j')}\in\mathrm{Stab}(\ket{S})$.
	In summary, we have a family of Paulis $\{P_i|i=1,\cdots, m\}$ such that $P_iP_j$ is a stabilizer for $\ket{S}$ for every $i$ and $j$.
	
	We define two stabilizer states $\ket{S_1}$ and $\ket{S_2}$ as:
	\begin{equation}
		\ket{S_1}=\frac{1+P_1}{\sqrt{2}}\ket{S},~~~~\ket{S_2}=\frac{1-P_1}{\sqrt{2}}\ket{S},
	\end{equation}
	hence $\ket{S}=\frac{1}{\sqrt{2}}(\ket{S_1}+\ket{S_2})$.
	By definition of $\ket{\phi_{1,2}}$ and \cref{eq:Pphiapprox}, we have:
	\begin{equation}\label{eq:Sphi}
		\ket{S_1} \approx \frac{1}{\sqrt{\alpha}}  \frac{1+P_1}{2}(\ket{\phi_1}+\ket{\phi_2})
		\approx \ket{\phi_1},
	\end{equation}
	and similarly $\ket{S_2}\approx \ket{\phi_2}$.
	In particular, the above defined $\ket{S_1}$ and $\ket{S_2}$ are nonzero.
	Importantly, since $P_iP_j=1$ on $\ket{S}$, it follows that $\ket{S_1}$ and $\ket{S_2}$ must be simultaneous eigenvectors of each $P_i$:
	\begin{equation}\label{eq:simul}
		\prod \frac{1+P_i}{2}\ket{S_1}=\ket{S_1},~~~~\prod \frac{1-P_i}{2}\ket{S_2}=\ket{S_2}.
	\end{equation}
	
	We define, for each $i$,
	\begin{equation}
		a_i=\expval{\frac{1+P_i}{2}}{\phi_1}.
	\end{equation}
	The disjointness of the light cones implies the following exact factorization:
	\begin{equation}\label{eq:fact}
		\expval{\prod \frac{1+P_i}{2}}{\phi_1} =\prod a_i.
	\end{equation}
	By \cref{eq:Sphi,eq:simul}, we have
	\begin{equation}
		\prod a_i \geq 1-O(\delta).
	\end{equation}
	Similarly, defining $b_i=\expval{\frac{1-P_i}{2}}{\phi_2}$, we get $\prod b_i \geq 1-O(\delta)$.
	By the AM-GM inequality, we have, for sufficiently small $\delta$,
	\begin{equation}\label{eq:totalerror}
		\sum a_i \geq m(\prod a_i)^{\frac{1}{m}} \geq m-O(\delta),~~~~
		\sum b_i \geq m-O(\delta).
	\end{equation}
	
	However, $(a_i,b_i)$ must be bounded away from $(1,1)$:
	\begin{equation}
		(1-a_i)+(1-b_i)>c>0,
	\end{equation}
	where the bound $c$ depends only on the depth $t$.
	(In fact, due to the data processing inequality, $\sqrt{a(1-b)}+\sqrt{b(1-a)}\geq 2^{-2^{t-1}}$, as in \cref{eq:DPIF}.)
	This implies $\sum (a_i+b_i)< (2-c)m$, a contradiction to \cref{eq:totalerror}.
	
	\paragraph{Technical Comments}
	One can also derive a no-go result for $\mathsf{FDU}\circ\mathsf{Clifford}$ following the beautiful arguments in \cite{korbany2025long,parham2025quantum}, based on the fact that the entropic quantities in stabilizer states are integers.
	For the ZX-cat state, $I(A:A')= 1-e^{-\Omega(|A|)}$ (\cref{eq：MI} is for $|A|=1$), where the region size $|A|$ is tied to the circuit depth in the entropy-based analysis.
	Therefore, a direct application of this method does not yield a constant approximation error tolerance.
	Although one could engineer a variant state $\ket{\psi'}\propto 2\ket{0^n}+\ket{+^n}$ to produce manifestly non-integer entropies,
	the dimension-dependence in the entropy continuity bounds would still introduce a depth-dependent overhead in the error analysis.
	
	\section{Nonabelian Topological Order}
	The no-go result for the preparation of (at least some) nonabelian topological orders through $\mathsf{FDU}\circ\mathsf{Clifford}$ has been established in the previous literature \cite{wei2025long,parham2025quantum}.
	(More precisely, these works show that if a topological code state can be prepared by $\mathsf{FDU}\circ\mathsf{Clifford}$, then the whole code space is FDU-equivalent to a stabilizer code. 
	One can then apply the belief that stabilizer codes can only realize abelian topological orders, or, in a more limited but rigorous approach, invoke constraints from code space dimensions.)

	In this section, we prove the no-go result for $\mathsf{Clifford}\circ\mathsf{FDU}$.
	
	We will consider concrete models such as quantum double \cite{kitaev2003fault} and string-net models \cite{levin2005string} on a lattice $\Lambda$ embedded in closed surfaces $\Sigma$ with genus $g\geq 1$.
	We denote the length scale of the lattice as $L$; for fixed $g$, $L=\Theta(\sqrt{n})$ where $n$ is the number of qudits.
	In such models, the Hamiltonian is a sum of commuting projectors:
	\begin{equation}\label{eq:commproj}
		H=\sum_i h_i,~~[h_i,h_j]=0,~~h_i^\dagger=h_i,~~ h_i^2=h_i,
	\end{equation}
	and ground states are frustration-free.
	The ground state subspace $\calC$ is degenerate ($\dim\calC\geq 2$) and is a quantum error-correcting code \cite{cui2020kitaev,qiu2020ground}.
	
	We focus on these specific models for convenience and rigor. 
	Our key arguments are motivated by the emergent macroscopic description, for which the correspondence with the underlying microscopic details is better understood in these models.

	\begin{definition}[topologically transversal gate, TTG]
		A topologically transversal gate is a logical gate $U$ such that $U=(\otimes_{j\notin D} U_j) \otimes U_D$ where $D$ is a topological disk and where $U_D$ is a unitary on $D$.
	\end{definition}
	It is perhaps more natural to define a topologically transversal gate as a logical gate $U$ such that there exists a decomposition of $\Lambda$ into disjoint unions of topological disks $\Lambda=\cup D_j$, such that $U=\otimes_j U_{D_j}$, where $U_{D_j}$ is a unitary on $D_j$.
	However, in our proof, we only need the case where there is at most one nontrivial disk: $\Lambda=D\cup (\cup_{i\notin D}\{i\})$.
	
	Our main result in this section is the following:
	\begin{theorem}\label{thm:nonTO}
		For quantum double models and string-net models
		that admits no TTG,
		no ground states can be prepared by $\mathsf{Clifford}\circ\mathsf{FDU}$ within an approximation error $n^{-\alpha} ~(\alpha>\frac{1}{2})$.
	\end{theorem}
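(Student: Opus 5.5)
We argue by contradiction. Suppose some ground state $\ket{\psi}\in\calC$ satisfies $\norm{\ket{\psi}-CU\ket{0^n}}\leq \epsilon=n^{-\alpha}$, with $C$ Clifford and $U$ of depth $t=O(1)$. The plan mirrors \cref{sec:CU}. Work in the frame $\ket{\phi}\defeq C^\dagger\ket{\psi}\approx U\ket{0^n}$, which has the exact local ``stabilizers'' $V_i=UZ_iU^\dagger$, each with $|\supp(V_i)|\leq 2^t$.

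\textbf{Step 1: from local stabilizers to transversal Paulis.} Expand $V_i=\sum_k\lambda_k R_k$ into at most $4^{2^t}$ Pauli strings. Conjugating back gives $W_i\defeq CV_iC^\dagger=\sum_k\lambda_k P_k$ with $P_k=CR_kC^\dagger$. Each $P_k$ is a Pauli string, hence a tensor product of single-site unitaries: a transversal operator. Moreover $W_i\ket{\psi}\approx\ket{\psi}$ up to $O(\epsilon)$. The $P_k$ generate a group of bounded rank, so I would split $\ket{\psi}$ into joint eigenspaces of this group. In the spirit of \cref{lem:componentwise}, I would then show the following: for every Pauli $P$ in the group with non-negligible weight in $W_i\ket{\psi}$, the state $P\ket{\psi}$ stays $O(\epsilon)$-close to $\ket{\psi}$ up to a phase, so in particular it stays near the ground space.

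\textbf{Step 2: such Paulis are logical.} Let $P$ be a transversal Pauli with $P\ket{\psi}$ near $\calC$. The conjugated terms $Ph_jP^\dagger$ are again local commuting projectors. By local indistinguishability of the code, $\expval{Ph_jP^\dagger}{\psi'}$ is independent of $\psi'\in\calC$, and it is small for $\psi'=\psi$. So $P$ nearly preserves $\calC$, with error of order $\epsilon$ per local term. To upgrade ``nearly'' to an exact logical gate, I would use the frustration-free commuting-projector structure to correct $P$ locally. The errors are confined near the places where the per-term bounds fail. Because we may sum errors only along a region of linear size $L=\Theta(\sqrt n)$, the total is $L\cdot n^{-\alpha}\to 0$ exactly when $\alpha>\tfrac12$. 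All remaining corrections can then be pushed into one topological disk $D$. The outcome is an operator $(\otimes_{j\notin D}P_j)\otimes U_D$ that preserves $\calC$, which is a TTG. By hypothesis every such TTG acts on $\calC$ as a scalar.

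\textbf{Step 3: contradiction with degeneracy.} Every relevant $P_k$ acts as a scalar on $\calC$, so $W_i$ acts on all of $\calC$ the way it acts on $\ket{\psi}$. Hence every $\ket{\psi'}\in\calC$ satisfies $W_i\ket{\psi'}\approx\ket{\psi'}$ for all $i$. Equivalently, $C^\dagger\ket{\psi'}$ is approximately stabilized by all $UZ_iU^\dagger$, which forces $C^\dagger\ket{\psi'}\approx U\ket{0^n}$. This would make $\calC$ essentially one-dimensional, contradicting $\dim\calC\geq2$. One point needs care: the $n$ approximate constraints must not accumulate error. I would handle this as in \cref{sec:ZXappro}, using exact factorization over disjoint light cones so that the relevant expectation values multiply.

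\textbf{Main obstacle.} Step 2 is the hard part: turning an approximately code-preserving transversal Pauli into an exact TTG, with controlled error and only a single nontrivial disk. I would first try cleaning the operator via the ground-state projector of a neighbourhood of a region where the local errors concentrate. Local topological order should make this region contractible, hence a disk. The threshold $\alpha>\tfrac12$ would enter through the $L\sim\sqrt n$ counting.
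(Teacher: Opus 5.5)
There is a genuine gap in Step~1, and everything downstream depends on it. Your claim is that each Pauli $P_k$ with non-negligible weight in $W_i\ket{\psi}$ fixes $\ket{\psi}$ up to a phase. This does not follow from $W_i\ket{\psi}\approx\ket{\psi}$, and it is false already for states of the assumed form. Take $C=I$ and $U=\bigotimes_i e^{-i\theta Y_i}$ with $\theta=\pi/8$. Then $U\ket{0^n}=(\cos\theta\ket{0}+\sin\theta\ket{1})^{\otimes n}$ is exactly fixed by $W_i=\cos 2\theta\, Z_i+\sin 2\theta\, X_i$, and both terms have weight $1/\sqrt{2}$. Yet $\langle Z_i\rangle=\langle X_i\rangle=1/\sqrt{2}$, so neither fixes the state up to a phase. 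Since $Z_i$ and $X_i$ anticommute, there are also no ``joint eigenspaces'' to split into. \Cref{lem:componentwise} worked only because $\ket{\phi_{1,2}}$ were stabilizer states, so \cref{lem:twostab} controlled every Pauli matrix element; a nonabelian ground state has no such structure.

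Step~3 inherits the problem. Even granting Steps~1--2, nothing controls $\langle\psi'|P_k|\psi'\rangle$ or $\langle\psi|P_k|\psi'\rangle$ for the Pauli strings you set aside. So you cannot conclude that $W_i$ acts uniformly on $\calC$. Your stated main obstacle is also misplaced: under its hypothesis, Step~2 is easy. The quantity $\langle\psi|P^\dagger h_jP|\psi\rangle$ is code-independent and takes finitely many values, so it is either $0$ or at least a constant. Hence $P\ket{\psi}\approx e^{i\theta}\ket{\psi}$ already gives $P\calC=\calC$ exactly, with no cleaning needed.

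What is missing is a bound on \emph{every} Pauli string, namely \cref{nologicalPauli}: $|\langle\chi_1|P|\chi_2\rangle|=e^{-\Omega(L)}$ for orthogonal code states. The paper proves it by a dichotomy on the number $N_e$ of excitations that $P$ creates.
\begin{itemize}
\item If $N_e\geq L/10$, the disentangling lemma and the finitely many local Pauli patterns give a product of $\Omega(L)$ local fidelities, each at most $1-c$.
\item If $N_e<L/10$, the excitations lie in a disk $T$ whose total charge is deterministic (and may be taken trivial, else the matrix element vanishes). Gluing (\cref{glue}) then yields $R_T$ on $T$ with $R_TP$ a TTG. This, not cleaning an approximately logical operator, is where the single disk enters. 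No-TTG forces $P\ket{\psi_2}=R_T^\dagger\ket{\psi_2}$. A logical projector on the stripes commutes with $R_T$ and shows $P\ket{\psi_2}\perp\ket{\psi_1}$.
\end{itemize}
The lemma makes $C^\dagger\ket{\psi}$ and $C^\dagger\ket{\psi'}$ indistinguishable up to $e^{-\Omega(L)}$ on all regions of size $cL$, and \cref{aqeccomplexity} concludes. The threshold $\alpha>\frac12$ comes from $d=\Theta(\sqrt{n})$ and the $n/d$ block terms there, not from summing errors along a strip.

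Given the lemma, your single-site Step~3 would also work, but not through light-cone factorization. That uses the product structure of $U\ket{0^n}$ and says nothing about $C^\dagger\ket{\psi'}$. Instead, note that $\sum_i U\frac{1-Z_i}{2}U^\dagger$ annihilates $U\ket{0^n}$. So an $n^{-\alpha}$-close state has energy at most $n^{1-2\alpha}$, and $C^\dagger\ket{\psi'}$ exceeds this by at most $n\,e^{-\Omega(L)}$.
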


	\subsection{Proof of \Cref{thm:nonTO}}
	The overall strategy is simple to state.
	Proving $\ket{\psi}\neq CU\ket{0^n}$, where $C$ is Clifford and $U$ is FDU, is equivalent to showing that $C^\dagger\ket{\psi}$ is long-range entangled for any $C$, which would follow if $C^\dagger\ket{\psi}$ is an (approximate) error-correcting code state with some nice code properties \cite{bravyi2006lieb,anshu2020circuit,yi2024complexity,bravyi2025much,yi2025lov}.

	The formal proof rests on the following two lemmas.
	\begin{lemma}\label{nologicalPauli}
		Suppose $\ket{\psi_1}$ and $\ket{\psi_2}$ are two orthogonal logical states for the above specified models that admit no TTG, then $\abs{\mel{\psi_1}{U}{\psi_2}} = \exp(-\Omega(L))$ for any Pauli operator $U=\otimes_i U_i$.
	\end{lemma}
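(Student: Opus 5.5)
Let $U=\otimes_i U_i$ be a tensor product of single-qudit unitaries (a Pauli string is a special case). The plan is a cleaning-lemma decomposition of $U$ into a piece on a disk and a piece that, restricted to a single site, is correctable. Concretely, we would show that $\Pi U\Pi$ ($\Pi$ the code projector) is, up to $\exp(-\Omega(L))$ error, proportional to a logical operator implemented by a TTG. No TTG exists by hypothesis, so that logical operator must be proportional to the identity. This forces $\mel{\psi_1}{U}{\psi_2}\approx c\braket{\psi_1}{\psi_2}=0$ up to $\exp(-\Omega(L))$.

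\textbf{Step 1: reduce to a localized logical action.} Use that the code $\calC$ of a quantum double/string-net model on $\Sigma$ has macroscopic code distance $d=\Theta(L)$. Also use that it satisfies local indistinguishability: for any region $R$ contained in a topological disk of linear size $<cL$, $\Pi O_R\Pi=c(O_R)\Pi$. Because $U$ is a product, we can write $U=U_{D}\otimes U_{D^c}$ for a chosen disk $D$. Consider $\Pi U \Pi$ acting as a logical operator $\bar U$, viewed as a map on $\calC$. The goal is to show that $\bar U$ is (approximately) proportional to a unitary logical gate realized by an operator of TTG form.

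\textbf{Step 2: polar/Schmidt-type argument.} Expand $U_{D^c}$ sitewise. Since $U_{D^c}$ is a product of single-site unitaries on the complement of a disk, we use the fact that the complement of a disk on a genus-$g$ surface supports nontrivial logicals. So cleaning alone does not trivialize it, and this is where the TTG hypothesis enters. First we show that if $\norm{\Pi U\Pi}$ restricted to $\calC$ has a non-scalar part of size $\gg \exp(-cL)$, then one can produce an exact logical unitary of TTG form. Heuristically, $\Pi U\Pi$ is a sum over configurations that factorize. Using the commuting-projector structure, we then apply a renormalization/anyon-string argument: products of onsite unitaries act on the ground space via local operators that create anyon pairs. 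Only the component that creates no anyons survives the projection. That component is an onsite-product operator modulo a disk-supported correction, i.e.\ a TTG, and so it must be a scalar. The residual amplitude comes from anyon processes that wind around noncontractible cycles. Each such process requires coherent action along a path of length $\Omega(L)$, and we would bound it per site by a factor $<1$, giving $\exp(-\Omega(L))$.

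\textbf{Main obstacle.} The hardest step is Step 2: turning ``no exact TTG'' into a quantitative $\exp(-\Omega(L))$ bound for a general onsite product. My first attempt would be a compactness argument over each site's unitary group combined with a transfer-matrix bound along a noncontractible loop. Each site contributes a factor bounded by $\max$ over nontrivial local logical-string amplitudes, which is strictly $<1$ when no onsite logical symmetry exists. If this fails, I would restrict to Pauli $U$ (all that is needed later) and exploit finiteness: there are finitely many single-site Paulis. The local amplitudes then take finitely many values, and a strict gap $<1$ follows directly from the absence of TTG.
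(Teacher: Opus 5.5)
\textbf{There is a genuine gap.} Your top-level picture is in the right spirit: the part of $U$ that maps $\calC$ to itself should be a TTG and hence trivial, and everything else should be exponentially suppressed. But Step 2, which you flag as the hard part, is the whole content of the lemma, and the mechanism you sketch there does not work.

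The no-TTG hypothesis applies only to an honest unitary logical gate. Splitting $\Pi U\Pi$ into ``components that create no anyons'' produces no such unitary, let alone one of the form $(\otimes_{j\notin D}U_j)\otimes U_D$. Nothing in your sketch constructs the disk correction $U_D$.

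The source of the $\exp(-\Omega(L))$ is also misidentified. It is not a transfer-matrix bound along noncontractible paths, and the per-site gap below $1$ has nothing to do with the absence of TTG. Compactness of the onsite unitary group gives no uniform gap at all, because unitaries near the identity cost arbitrarily little. That is exactly why the lemma is stated only for discrete onsite Paulis, so your generalization to arbitrary product unitaries should be dropped rather than kept as the primary route.

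The missing organizing idea is a dichotomy on the number $N_e$ of Hamiltonian terms that fail to annihilate $U\ket{\psi}$. This number is independent of $\ket\psi\in\calC$ by the code property.

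\emph{Case $N_e\ge L/10$.} Pick $m=\Omega(L)$ excitations with pairwise separated supports $A_1,\dots,A_m$. There are finitely many Pauli configurations on each constant-size $A_i$, and the local reduced state is the same for all code states. Together these give a uniform bound $F((\psi_1)_{A_i},(U\psi_2)_{A_i})<1-c$. The disentangling lemma makes both reduced states on $\cup_i A_i$ products (using that $U$ is onsite). Monotonicity of fidelity then yields $|\mel{\psi_1}{U}{\psi_2}|\le(1-c)^m$.

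\emph{Case $N_e<L/10$.} All excitations fit in a disk $T$ that avoids a pair of noncontractible stripes $S$. The total charge in $T$ is deterministic, because it is measured both by a ribbon in $\partial T$ and by an uncorrelated ribbon outside $T$. If that charge is nontrivial, the matrix element already vanishes. Otherwise $U\psi$ and $\psi$ agree on $\partial T$, and a gluing lemma gives a unitary $R_T$ on $\mathring{T}$, depending only on $\calC$, with $R_TU\ket{\psi}\in\calC$ for every code state. Only now does a TTG appear, namely $R_TU$. The hypothesis forces $U\ket\psi=R_T^\dagger\ket\psi$. A logical projector supported on $S$ that separates $\ket{\psi_1}$ from $\ket{\psi_2}$ commutes with $R_T$, which gives $\mel{\psi_1}{U}{\psi_2}=0$ exactly.

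So the no-TTG hypothesis is used only in the few-excitation case, and the exponential decay comes only from the many-excitation case.
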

	
	For general systems, there might be some ambiguities in defining Pauli and Clifford.
	(For example, suppose the system are made of 4-dimensional qudits, should we use $\mathbb{Z}_4$-Pauli or $\mathbb{F}_4$-Pauli?)
	To mitigate the ambiguity, we will only use the most common properties of Pauli operators: they are onsite and discrete.
	Then this lemma could be false if the code admits a transversal gate that rotates $\ket{\psi_2}$ to $\ket{\psi_1}$. 
	This is why we need to rule out transversal gates.

	\begin{lemma}\label{aqeccomplexity}
		If $\ket{\phi}$ and $\ket{\phi'}$ satisfy $\braket{\phi}{\phi'}=0$ and $\norm{\phi_A-\phi'_A}_1<\epsilon$ for any subset $A$ such that $|A|<d$, then
		\begin{equation}
			\cc_\delta(\phi) > 
			\log\frac{d}{\max\{n(\delta+\epsilon), 1\}}.
		\end{equation}
	\end{lemma}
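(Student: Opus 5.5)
We prove Lemma~\ref{aqeccomplexity} by a light-cone argument that sends the depth-$t$ circuit and the local indistinguishability of $\ket{\phi},\ket{\phi'}$ into a contradiction. Here $\cc_\delta(\phi)$ is the minimal depth $t$ of a (all-to-all local, two-qudit-gate) circuit $U$ with $\norm{U\ket{0^n}-\ket{\phi}}\le\delta$. Suppose $t\le\log\frac{d}{\max\{n(\delta+\epsilon),1\}}$, so that $2^t\max\{n(\delta+\epsilon),1\}\le d$, and let $\ket{\tilde\phi}=U\ket{0^n}$ with $\norm{\tilde\phi-\phi}\le\delta$.

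First we use orthogonality through a single-qubit decomposition. Since $\braket{\phi}{\phi'}=0$, we have $|\mel{\phi'}{\tilde\phi}|\le\delta$. Write
\[
\ketbra{0^n}=\prod_{i=1}^n \ketbra{0}_i ,
\qquad
\Pi_i\defeq U\ketbra{0}_iU^\dagger .
\]
Each $\Pi_i$ is a projector supported on the forward light cone $\calL_f(i)$, with $|\calL_f(i)|\le 2^t<d$, and $\ketbra{\tilde\phi}=\prod_i\Pi_i$. Next we telescope. The target is to show that $\prod_i\Pi_i$ applied to $\ket{\phi'}$ has large overlap with $\ket{\tilde\phi}$, i.e. that $\ket{\phi'}$ is nearly stabilized by every $\Pi_i$. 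For each $i$,
\[
\expval{\Pi_i}{\phi'}\ge \expval{\Pi_i}{\phi}-\tfrac12\epsilon \ge \expval{\Pi_i}{\tilde\phi}-2\delta-\tfrac12\epsilon = 1-2\delta-\tfrac12\epsilon ,
\]
using $\norm{\phi_A-\phi'_A}_1<\epsilon$ on $A=\calL_f(i)$ in the first step. Hence $\norm{(1-\Pi_i)\phi'}\le\sqrt{2\delta+\epsilon}$.

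The main obstacle is that a naive union bound over the $n$ projectors gives
\[
\norm{(1-\textstyle\prod_i\Pi_i)\phi'}\le n\sqrt{2\delta+\epsilon},
\]
which yields a worse dependence, with a square root, than the stated $n(\delta+\epsilon)$. To get the linear form I would instead use the quantum union bound (Gao's, or Sen's with a constant), applied in expectation form: if $\expval{\Pi_i}{\phi'}\ge1-\eta_i$ for commuting projectors, then
\[
\expval{\textstyle\prod_i\Pi_i}{\phi'}\ge 1-\textstyle\sum_i\eta_i .
\]
Here the $\Pi_i$ commute, being conjugates of commuting projectors, so the bound is simply the Bonferroni inequality for commuting projections. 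This gives
\[
|\braket{\tilde\phi}{\phi'}|^2\ge 1-n(2\delta+\tfrac12\epsilon).
\]
Combining with $|\braket{\tilde\phi}{\phi'}|\le\delta$ gives a contradiction as long as $n(\delta+\epsilon)$ is below a small constant.

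The remaining case, where $n(\delta+\epsilon)\gtrsim1$, is what the $\max$ in the bound accounts for. Here I would group the qubits: partition them into $k\approx n(\delta+\epsilon)$ blocks, each a union of about $n/k$ qubits. Each block's light cone then has size at most $2^t n/k$, and we need this to be below $d$, which is exactly the stated threshold $2^t\cdot n/\max\{n(\delta+\epsilon),1\}\le d$ up to rearrangement. The same commuting-projector argument with $k$ block projectors $U\ketbra{0}_{\text{block}}U^\dagger$ then gives
\[
|\braket{\tilde\phi}{\phi'}|^2\ge1-k(2\delta+\epsilon),
\]
which we choose $k$ to make large, contradicting $|\braket{\tilde\phi}{\phi'}|\le\delta$. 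I expect calibrating the block count against the light-cone size, and thereby matching the exact form of the logarithm including constants, to be the fiddly step. If the constants do not match the stated bound exactly, I would absorb them into the slack of the strict inequality.
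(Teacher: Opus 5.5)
Your basic tool is correct, and it is the paper's tool in another form. For commuting projectors, $1-\prod_j P_j\le\sum_j(1-P_j)$ is exactly the statement that $H=\sum_j(1-\ketbra{0}_{A_j})$ has unique ground state $\ket{0^n}$ with gap $1$, which is what the paper uses. Your single-qubit argument is also fine, up to constants, when $n(\delta+\epsilon)\lesssim 1$.

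The gap is in the regime $n(\delta+\epsilon)>1$. That is the only regime where the form $\log\frac{d}{n(\delta+\epsilon)}$ has content, and it is the regime Theorem~\ref{thm:nonTO} needs ($\delta=n^{-\alpha}$ with $\frac12<\alpha<1$). There your block calibration fails on both counts:
\begin{itemize}
\item With $k\approx n(\delta+\epsilon)$ blocks of size $n/k$, the light-cone condition $2^t n/k<d$ becomes $2^t<d(\delta+\epsilon)$. This is not a rearrangement of your hypothesis $2^t\le d/(n(\delta+\epsilon))$; the two agree only when $n(\delta+\epsilon)^2=1$.
\item The bound $\abs{\braket{\tilde\phi}{\phi'}}^2\ge 1-k(2\delta+\epsilon)$ contradicts $\abs{\braket{\tilde\phi}{\phi'}}\le\delta$ only when $k$ is \emph{small}, namely $k(\delta+\epsilon)\lesssim 1$. ``Choosing $k$ large'' goes the wrong way. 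Your choice gives $k(\delta+\epsilon)\approx n(\delta+\epsilon)^2$, which is unbounded, e.g.\ for $d=n/3$ and $\delta+\epsilon$ a small constant, where the lemma still asserts a nontrivial constant depth bound.
\end{itemize}

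The two constraints pull in opposite directions, and that trade-off is what produces the logarithm. Light cones of size $<d$ force blocks of size at most about $d/2^t$, hence at least $k\approx n2^t/d$ blocks. The union bound needs $k(\delta+\epsilon)\lesssim 1$. So take the \emph{largest} admissible blocks, $\abs{B_j}\approx d/2^t$. Then
\[
\abs{\braket{\tilde\phi}{\phi'}}^2\ge 1-O\bigl(n2^t(\delta+\epsilon)/d\bigr),
\]
which contradicts orthogonality exactly when $2^t\lesssim d/(n(\delta+\epsilon))$.

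This is the paper's proof. Conjugating by the depth-$t$ circuit shrinks the indistinguishability radius from $d$ to $d/2^t$. For the Hamiltonian above with blocks $\abs{A_j}<d/2^t$, both conjugated states then have energy below $\frac{n2^t(\delta+\epsilon)}{2d}$. If that is below $\frac12$, both orthogonal states would have squared overlap $>\frac12$ with $\ket{0^n}$, which is impossible.

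With this choice your two cases collapse into one. The $\max\{\cdot,1\}$ merely records that a block must contain at least one qubit, i.e.\ that the argument needs $2^t<d$. Separately, constant factors such as your $2\delta+\epsilon/2$ sit inside the logarithm, so they are not absorbed by the strictness of the inequality. That point is cosmetic, and the paper is similarly loose about rounding the number of blocks.
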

	This lemma is a robust version of the well-known fact that quantum error correcting codes are entangled \cite{bravyi2006lieb}.
	
	Now, let us derive \cref{thm:nonTO} from \cref{nologicalPauli,aqeccomplexity}.
	\begin{proof}[Proof of \cref{thm:nonTO}]
		Suppose $\ket{\psi}\approx CU\ket{0}$ up to $\delta$ error, where $C$ is Clifford. Our task it to lower bound the depth of $U$.
		
		Let us pick another code state $\ket{\psi'}$ such that $\braket{\psi}{\psi'}=0$.
		Consider $C^\dagger\ket{\psi}$ and $C^\dagger\ket{\psi'}$.
		We claim that these two states are approximately locally indistinguishable.
		To see it, let us compare the expectation value of any local operator $V$.
		We decompose it into sum of Pauli strings $V=\sum \lambda_i P_i$ (the number of terms is at most $4^{|\supp(V)|}$).
		For any Pauli term $P$ in the expansion, note that $CPC^\dagger$ is also Pauli, hence:
		\begin{equation}
			\begin{aligned}
				\mel{\psi}{CPC^\dagger}{\psi} -\mel{\psi'}{CPC^\dagger}{\psi'} 
				=\mel{\chi_1}{CPC^\dagger}{\eta_2} + \mel{\chi_2}{CPC^\dagger}{\eta_1}
				=\exp(-\Omega(L)).
			\end{aligned}
		\end{equation}
		Here we have decomposed $\ket{\psi}$ and $\ket{\psi'}$ as $\frac{1}{\sqrt{2}}(\ket{\chi_1}\pm \ket{\chi_2})$ and applied \cref{nologicalPauli} to $\ket{\chi_{1,2}}$. 
		As shown in \cref{lem:twostab}(2), as long as $|\supp(V)|=O(L)$ for a small enough proportional constant, the total difference will be $\exp(-\Omega(L))$.
		
		So $C^\dagger\ket{\psi}$ and $C^\dagger\ket{\psi'}$ satisfy \cref{aqeccomplexity} for $\epsilon=\exp(-\Omega(L))$ and $d=\Omega(\sqrt{n})$.
		Pick $\delta=n^{-\alpha} ~(\alpha>\frac{1}{2})$ and applying the lemma, we find $\cc_\delta(\psi)=\Omega(\log n)$.
	\end{proof}
	
	\begin{proof}[Proof of \cref{nologicalPauli}]
		Let us consider the number of excitations produced by $U$, denoted by $N_e$.
		Here, an excitation is defined as a term in the Hamiltonian that does not annihilate $U\ket{\psi}$.
		Since $\calC$ is an error-correcting code and $U$ is transversal, the above notion is independent on the choice of $\ket\psi\in\calC$.
		There are two cases.
		
		\textbf{Case 1: $N_e\geq L/10$}.
		We select a collection of excitations whose supports, denoted as $A_1$, $A_2,\cdots,A_m$, are pairwise separated by a distance $\Omega(1)$.
		We can always achieve $m=\Omega(L)$ by a simple greedy algorithm.
		
		Since there are only finite number of possible Pauli matrices $U_i$ on each site, and a Hamiltonian term only involve constant number of sites, we have, for each $i$,
		\begin{equation}
			F((\psi_1)_{A_i}, (U\psi_2)_{A_i}) <1-c.
		\end{equation}
		Here $F(,)$ is the fidelity, $c$ is an absolute constant that is independent of $i$ (only depends on the model and the choice of Pauli matrices).
		Denote $A=\cup A_i$.
		Due to the disentangling lemma \cite{bravyi2010tradeoffs}, $(\psi_1)_A=\otimes (\phi_1)_{A_i}$ and $(U\psi_2)_A=\otimes (U\phi_2)_{A_i}$, hence:
		\begin{equation}
			|\mel{\psi_1}{U}{\psi_2}| \leq \prod_{i=1}^m F((\psi_1)_{A_i}, (U\psi_2)_{A_i}).
		\end{equation}
		So $\abs{\mel{\psi_1}{U}{\psi_2}} = \exp(-\Omega(m)) = \exp(-\Omega(L))$.

		\textbf{Case 2: $N_e<L/10$}.
		In this case, we claim that $\mel{\psi_1}{U}{\psi_2}=0$.
		
		Since $N_e<L/10$, we can always choose two wide enough stripes\footnote{Here we use torus as an example. The generalization to higher genus is straightforward.}, $S_v$ and $S_h$ that avoid all excitations, and all excitations are in a rectangular region $\mathring{T}$, the interior of $T$. See \cref{fig:stripes}.
		
		\begin{figure}
			\centering
			\begin{tikzpicture}
				\draw[thick] (0,0) rectangle (5,5);
				
				\def\x{1}
				\def\y{0.7}
				\draw[thick, fill=yellow] (\x,0) -- (\x+\y,0) -- (\x+\y,5-\x-\y) -- (5,5-\x-\y) -- (5,5-\x) -- (\x+\y,5-\x) -- (\x+\y,5) -- (\x,5) -- (\x,5-\x) -- (0,5-\x) -- (0,5-\x-\y) -- (\x,5-\x-\y) -- cycle;
				
				\node[scale=1.5] at (\x+\y/2,5-\x-\y/2) {$S$};
				
				\draw[thick] (2.5, 0.5) rectangle (4.5, 2.5);
				\draw[thick] (2.8, 0.8) rectangle (4.2, 2.2);
				
				\def\z{0.09}
				\fill[red] (3.2, 1.7) circle (\z);
				\fill[red] (3.3, 1.3) circle (\z);
				\fill[red] (3.7, 1.8) circle (\z);
				\fill[red] (3.8, 1.4) circle (\z);
				
				\node[scale=1] at (3.6,2.35) {$\partial T$};
				\node[scale=1] at (3.9,1.9) {$\mathring{T}$};
			\end{tikzpicture}
			\caption{Yellow stripe $S=S_v\cup S_h$. 
				Red dots denote excitations in $U\ket{\psi}$. $T=\partial T\cup \mathring{T}$.}
			\label{fig:stripes}
		\end{figure}

		We will remove the excitations and return a pure logical state by applying a unitary $R_T$ supported on $T$ (in fact, on $\mathring{T}$)\footnote{We are not quite performing a quantum error correction. In particular, $R_T$ could depend on $U$, but this is allowed since $U$ is fixed.}.
		To see such $R_T$ exists, we first show that the total anyon charge in $T$ must be deterministic (no ``superposition of anyon types").
		This is because the anyon charge can be measured by a ribbon operator in $\partial T$, as well as a ribbon operator outside $T$.
		On one hand, since two ribbon operators are spatially separated, they do not have any correlation.
		On the other hand, they are measuring the same anyon charge.
		The only consistent possibility is that both measurements yield the same, deterministic value.
		
		Due to the code property, this total anyon charge is independent of the specific logical state $\ket\psi\in\calC$.
		Furthermore, we may assume that this charge is 1, the vacuum.
		(Otherwise, $\mel{\psi_1}{U}{\psi_2}=0$ already holds.)
		We therefore conclude that, in the state $U\ket\psi$, the annulus $\partial T$ contains no excitations and encloses the trivial anyon charge, as in the original state $\ket{\psi}$.
		Therefore, 
		\begin{equation}\label{eq:equalpartialT}
			(U\ketbra{\psi}U^\dagger)_{\partial T} = (\ketbra{\psi})_{\partial T}.
		\end{equation}
		
		Now we can ``glue" the $T$-portion of $\ket{\psi}$ onto of $U\ket{\psi}$:
		we claim there exists a unitary $R_T$ such that the state $\ket\Psi \defeq R_T U \ket{\psi}$ satisfies:
		\begin{equation}\label{eq:glueresult}
			(\ketbra{\Psi})_T = (\ketbra{\psi})_T,~~~
			\Tr_{\mathring{T}}\ketbra{\Psi} = \Tr_{\mathring{T}}\ketbra{U\psi}.
		\end{equation}
		This is simple, but we defer the proof to \cref{glue} to avoid distraction.
		Eq.~(\ref{eq:glueresult}) implies $\ket{\Psi}\in\calC$.
		Moreover, we note that $R_T$ can be determined solely from $(\ketbra{\psi})_T$, hence only depends on logical space $\mathcal{C}$.
		Namely, the same $R_T$ works for all $\ket{\psi}\in\calC$.
		Therefore, the unitary $V\defeq R_TU$ is a logical gate, 
		which is topologically transversal by construction.
		By the no TTG assumption, $V$ must be the logical identity, hence
		\begin{equation}\label{eq:U=RT}
			U\ket{\psi} = R_T^\dagger \ket{\psi},~~\forall \ket\psi\in\calC.
		\end{equation}

		Now back to the mutually orthogonal states $\ket{\psi_{1,2}}$. 
		Eq.~(\ref{eq:U=RT}) implies $\ketbra{U\psi_2}$ and $\ket{\psi_2}$ are identical outside $T$, which immediately implies the desired result $\mel{\psi_1}{U}{\psi_2}=0$.
		More precisely, for such topological codes, $\braket{\psi_1}{\psi_2}$ implies the existence of a logical projector $X$ supported on $S$ that perfectly distinguish them:
		\begin{equation}
			X\ket{\psi_1}=\ket{\psi_1},~~~X\ket{\psi_2}=0.
		\end{equation}
		The locality of $R_T$ implies $XU\ket{\psi_2}=XR_T^\dagger\ket{\psi_2}=R_T^\dagger X\ket{\psi_2}=0$.
		Although $U\ket{\psi_2}\notin\calC$ in general, it already follows that $U\ket{\psi_2}\perp\ket{\psi_1}$.
	\end{proof}

	\begin{proof}[Proof of \cref{aqeccomplexity}]
		We first claim that $\norm{\phi-\ketbra{0^n}}_1< \delta$ implies $\delta+\epsilon > \frac{d}{n}$.
		Let us consider $H=\sum (1-\ketbra{0}_{A_i})$, where we have divided the full system as $\cup_i A_i$ where each $|A_i|<d$.
		There are $\lceil\frac{n}{d-1}\rceil$ terms, but we simply write it as $\frac{n}{d}$ in the following.
		$H$ has the unique ground state $\ket{0^n}$ with energy 0, and a spectral gap 1.
		
		Since $\norm{\phi-\ketbra{0^n}}_1< \delta$, we have $\expval{H}{\phi}<\frac{n\delta}{2d}$, hence $\expval{H}{\phi'}<
		\frac{n(\delta+\epsilon)}{2d}$.
		If, on the contrary, $\delta+\epsilon<\frac{d}{n}$, then $\expval{H}{\phi}, \expval{H}{\phi'}<\frac{1}{2}$, which implies 
		$\abs{\braket{0^n}{\phi}}^2, \abs{\braket{0^n}{\phi'}}^2>\frac12$.
		This forces $\braket{\phi}{\phi'}\neq 0$, a contradiction. 
		
		Now consider $U\phi U^\dagger$ and $U\phi' U^\dagger$ where $U$ is a unitary circuit of depth $t$ such that $d>2^t$.
		They satisfy the conditions after replacing $d$ by $\frac{d}{2^t}$.
		Hence, by the above argument, if $\norm{U\phi U^\dagger-\ketbra{0^n}}_1<\delta$, then $\delta+\epsilon > \frac{d}{2^tn}$.
		Therefore,
		\begin{equation}
			\cc_\delta(\phi) > 
			\log\frac{d}{\max\{n(\delta+\epsilon), 1\}}.\tag*{\qedhere}
		\end{equation}
	\end{proof}
	
	\subsection{Comments on Transversal Gates}
	In this subsection, we present explicit models that satisfy the no-TTG condition. 
	Both arguments are essentially based on the macroscopic description of topological order—specifically, topological quantum field theory (TQFT) and unitary modular tensor categories (UMTC). 
	The first argument is rigorous. 
	The second employs a more heuristic, physics-based perspective; it is provided for completeness and can be omitted by readers who prefer a strictly rigorous treatment.
	
	\subsubsection{Via Universality of Mapping Class Group}
	
	We begin by very briefly summarizing relevant arguments from Ref.~\cite{beverland2016protected} (particularly Sections 3.1, 3.2, and 4.1).
	This work concerns locality-preserving logical gates (henceforth LPU gates, where ``U" stands for unitary), which include transversal gates in particular.
	
	Denote the algebra of linear maps $\calC\to\calC$ as $[\calA_\calC]$.
	A logical gate $L$ induces a well-defined automorphism $\rho_L$ of $[\calA_\calC]$ as:
	\begin{equation}\label{eq:rhoU}
		\rho_L: [X] \mapsto L^\dagger [X] L = [L^\dagger X L],
	\end{equation}
	where $X:\mathcal{H}\to \mathcal{H}$ is a $\calC$-preserving map, $[X]: \calC\to\calC$ is its restriction on $\calC$.
	Given a nontrivial loop $C$ on the surface $\Sigma$, consider the subalgebra $[\calA(C)]\subseteq [\calA_\calC]$ consisting of elements that can be realized by operators along $C$. 
	It is isomorphic to the Verlinde algebra, and admits a unique basis of minimal idempotents such that the algebra decomposes as $\bigoplus \mathbb{C}$.
	If $L$ is locality-preserving, then $\rho_L$ restricts to an automorphism of $[\calA(C)]$, which acts as a permutation on this idempotent basis.
	
	We employ a set of loops to to define a pants decomposition of $\Sigma$.
	This decomposition defines a basis for $\mathcal{C}$ in which the Verlinde algebras of all defining loops are diagonal. Assuming the fusion rules are multiplicity-free ($N_{ab}^c \in \{0, 1\}$), each basis vector is uniquely specified by its simultaneous eigenvalues (physically, anyon flux labels). Consequently, in this basis, 
	\begin{equation}\label{eq:Lmono}
		\text{ $L$ must be a monomial matrix\footnote{A matrix is called monomial if it has the form $\Pi D$, where $\Pi$ is a permutation matrix and $D$ is a diagonal unitary.}.}
	\end{equation}
	This property holds for any valid pants decomposition of $\Sigma$.
	In particular, if a mapping class group (MCG) element, which transforms one pants decomposition to another, acts projectively on $\mathcal{C}$ via the unitary $U$ (effectively the change of basis matrix), then the transformed gate $U L U^\dagger$ must also be monomial.
	
	The observation (\ref{eq:Lmono}) (and that for $U L U^\dagger$) is all we need from Ref.~\cite{beverland2016protected}.
	We note that while Ref.~\cite{beverland2016protected} considers LPU gates, the same arguments apply to our topologically transversal gates, since nontrivial loops can be deformed to avoid the potentially problematic disk-like region.

	\begin{proposition}\label{thm:doubleuniversal}
		On a closed surface, if the mapping class group for a TQFT corresponding to a UMTC $\calT$ is universal, and $\calT$ has multiplicity-free fusion rule, then the TQFT corresponding to $\calT \boxtimes\Bar{\calT}$ has no nontrivial LPU gates.
	\end{proposition}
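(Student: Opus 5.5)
The plan is to combine the monomial constraint \eqref{eq:Lmono} with universality of the mapping class group (MCG) representation.

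For $\calT\boxtimes\bar\calT$ on a closed surface $\Sigma$, the code space factorizes as $\calC\cong V_\calT(\Sigma)\otimes \overline{V_\calT(\Sigma)}$. The MCG acts projectively as $\rho(\gamma)\otimes\overline{\rho(\gamma)}$. The set $G=\overline{\{\rho(\gamma)\}}$ is dense in $SU(D)$ modulo phases (universality, with $D=\dim V_\calT(\Sigma)$), so the closure of the doubled action contains $\{W\otimes \bar W: W\in SU(D)\}$. This group acts on $\mathrm{End}(\mathbb{C}^D)\cong \mathbb{C}^D\otimes\overline{\mathbb{C}^D}$ by conjugation $X\mapsto WXW^\dagger$.

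Step 1. For a pants decomposition of $\Sigma$, I will take the product basis $\{\ket{a}\otimes\ket{\bar b}\}$ of $\calC$. Here $a$ and $b$ are flux-label assignments for $\calT$ and $\bar\calT$. These are exactly the labels of the doubled theory: each loop's Verlinde algebra for $\calT\boxtimes\bar\calT$ is the tensor product of the two factors' algebras, and multiplicity-freeness of $\calT$ gives multiplicity-freeness of the double.

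A nontrivial LPU gate $L$ must then be monomial in this basis. By the transformation argument after \eqref{eq:Lmono}, $(W\otimes\bar W)L(W\otimes\bar W)^\dagger$ must also be monomial for every $W$ realized by the MCG. The monomial condition is closed, so by density it holds for all $W\in SU(D)$. Up to an overall phase, which is logically trivial, I will show this forces $L\propto I$.

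Step 2, the main obstacle. I need a continuity and dimension argument: the set of $L'$ whose conjugates by all $W\otimes\bar W$ stay monomial is small. The monomial matrices form a finite union of tori (permutation times diagonal unitary). The orbit $\{(W\otimes\bar W)L(W\otimes\bar W)^\dagger\}$ is connected, so it lies in a single component $\Pi\cdot\mathrm{Diag}$ and contains $L$ itself. Since $W=I$ gives $L$, that component is $L$'s own permutation class.

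Differentiating at $W=I$ along $\xi\in\mathfrak{su}(D)$, the commutator $[\xi\otimes 1 - 1\otimes\bar\xi^{*}, L]$ must lie in the tangent space $\Pi\cdot\mathrm{diag}$. I would first handle the case where $L$ is diagonal ($\Pi=1$). Taking $\xi$ to be an elementary off-diagonal generator $E_{ij}-E_{ji}$, the commutator with a diagonal $L$ has entries proportional to $L_{(i,c),(i,c)}-L_{(j,c),(j,c)}$ in off-diagonal positions that are not permitted. Hence $L$'s diagonal is independent of the first label, and symmetrically of the second, so $L\propto I$.

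For $\Pi\neq 1$, I expect the same infinitesimal argument to show the permutation must commute with all such generators. Alternatively, and perhaps more cleanly, I can use the group directly: choose $W$ to be a diagonal unitary with generic phases. This conjugates $L=\Pi D$ into $\Pi D'$, where the entry phases change by $w_a\bar w_b/(w_{a'}\bar w_{b'})$. That stays monomial, so it gives no constraint. Next choose $W$ to be a rotation mixing two basis vectors. A nontrivial $\Pi$ moving some $(a,b)$ would generate at least two nonzero entries in one row after conjugation, unless $\Pi$ is compatible with all of $SU(D)$. Only the identity and (when relevant) the swap-type permutations survive, and the swap of the two tensor factors is not of the form permuting labels within $\calC$ consistently with $W\otimes\bar W$ conjugation for generic complex $W$.

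Step 3. Conclude that every LPU gate, and by the remark after \eqref{eq:Lmono} also every TTG, acts as a scalar on $\calC$. Hence there are no nontrivial LPU gates. The technical care lies mainly in Step 2, specifically in ruling out nontrivial permutations.
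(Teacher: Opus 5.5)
There is a genuine gap, in Step 2 for $\Pi\neq 1$. Your Step 1 and the diagonal case of Step 2 are sound: the generator coming from $E_{ij}-E_{ji}$ does force the diagonal of $L$ to be independent of each label. But the case $\Pi\neq 1$ is the real content of the proposition, and you only sketch it. The sketch cannot succeed as written, because it never uses the size of $D=\dim V_\calT(\Sigma)$, and for $D=2$ the algebraic claim you are trying to prove is false. Take $\epsilon=\begin{pmatrix}0&1\\-1&0\end{pmatrix}$ and $K=(\epsilon\otimes\epsilon)\,\mathrm{SWAP}$. For every $W\in U(2)$,
$(W\otimes\bar W)K(W\otimes\bar W)^\dagger=(W\epsilon W^{T}\otimes \bar W\epsilon W^{\dagger})\,\mathrm{SWAP}=|\det W|^2K=K$. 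This $K$ is unitary and monomial: it exchanges $e_1\otimes e_1\leftrightarrow e_2\otimes e_2$ and sends $e_1\otimes e_2$, $e_2\otimes e_1$ to minus themselves. Yet it is not scalar. It is exactly a ``swap-type'' permutation compatible with every $W\otimes\bar W$ conjugation, which contradicts the last assertion of your Step 2. So two things are missing: a proof that $D\geq 3$, and an argument that actually uses $D\geq 3$ to exclude nontrivial permutations.

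The paper gets $D\ge 3$ from the hypotheses. Torus (modular) representations of any MTC have finite image, so universality forces genus $g\geq 2$. There Verlinde's formula gives $\dim\calC=\sum_i(\mathcal{D}/d_i)^{2g-2}>\mathrm{rank}(\calT)\geq 2$.

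The paper then avoids permutation case analysis altogether:
\begin{enumerate}
\item Monomiality of all conjugates makes $K(x\otimes\bar x)$ a product vector for every $x$.
\item The $2\times 2$ minors of $K(z\otimes w)$ are polynomials vanishing on $\{(z,\bar z)\}$, so they vanish identically, and $K$ preserves all product vectors.
\item Westwick's theorem gives $K=A\otimes B$ or $(A\otimes B)\,\mathrm{SWAP}$.
\item Representation theory finishes. In the first case, $\mathbf{n}\otimes\bar{\mathbf{n}}=\mathbf{1}\oplus\mathbf{Adj}$ with $\dim\mathbf{Adj}=n^2-1>n$ forces $A$ and $B$ to be scalars. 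In the second case, $\mathbf{n}\otimes\mathbf{n}=\mathbf{Sym}\oplus\mathbf{Alt}$ kills $\mathbf{Sym}$, and kills $\mathbf{Alt}$ for $n\geq 4$; $n=3$ is checked by hand.
\end{enumerate}
The $\mathbf{Alt}$ piece is precisely where $n=2$ leaks through, producing the $K$ above.

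If you want to keep your infinitesimal route, it can be completed, and it would be more elementary than the paper's:
\begin{enumerate}
\item With complexified generators $X_M=M\otimes 1-1\otimes M^{T}$, the condition is that $X_M-LX_ML^{-1}$ is diagonal. Hence $\sigma\times\sigma$ must preserve the off-diagonal support of each $X_{E_{ij}}$, $i\neq j$.
\item This makes $\sigma$ an automorphism of the rook's graph on $[D]\times[D]$. So $\sigma$ is either of product type $(a,b)\mapsto(\pi(a),\pi'(b))$ or of swap type $(a,b)\mapsto(\pi'(b),\pi(a))$.
\item Product type is forced to be the identity.
\item Swap type requires $\pi(i)=j$ for every $j\neq i$, which is impossible exactly when $D\geq 3$.
\item Once $\sigma$ is the identity, your diagonal argument applies.
\end{enumerate}
None of this is in the proposal, and the bound $D\geq 3$ must be derived from universality first.
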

	It is known that some TQFTs have multiplicity-free fusion rule and universal MCG on closed surfaces of genus $g\geq 2$ (for example, those in Ref.~\cite{freedman2001two,larsen2005density}).
	Taking any such theory and its time reversal copy, the total system can be realized via the string-net construction: $\mathcal{Z}(\calT) \cong \calT \boxtimes \overline{\calT}$ (note that $\calT$ is already modular) \cite{muger2003subfactors}, and our proposition then applies.
	
	We note that, while no topological orders can have universal MCG on torus, some ``doubled" models on torus still admits no LPU gates, see appendix \ref{sec:doubleFib}.

	\begin{proof}[Proof of proposition \ref{thm:doubleuniversal}]
		Denote the ground state subspace of theory $\calT$ as $\calC$.
		For $\calT \boxtimes\Bar{\calT}$, the ground state subspace is $\calC\otimes \calC^\ast$, and the mapping class group (MCG) acts on it by $\rho_g=U(h)\otimes U^\ast(h)$.
		
		We claim that the universality implies $\dim\calC\geq 3$.
		In fact, the modular representation for any MTC has finite image in $PU(\calC)$ \cite{ng2010congruence}, so we are forced to consider surface with genus $g\geq 2$.
		Then $\dim\calC$ can be computed by Verlinde's formula \cite{verlinde1988fusion}:
		\begin{equation}
			\dim\calC = \sum_i (\frac{\mathcal{D}}{d_i})^{2g-2}>\rank(\calT)\geq 2.
		\end{equation}
		Here $d_i$ is the dimension of each anyon, $\mathcal{D}$ is the total quantum dimension, $\rank(\calT)$ equals the number of anyon types.
		
		Suppose $L$ is a LPU gate, which is monomial, let us consider $\rho_g L\rho_g^\dagger$ for $\forall g\in \text{MCG}$, which must also be monomial, too.
		Since (the image of) $\{U(h)\}$ is dense in $PU(\calC)$, by continuity, $(U\otimes U^\ast)L(U\otimes U^\ast)^\dagger$ must be monomial for $\forall U\in U(\calC)$.
		\Cref{lemma:doubleuniversal} below then implies that $L$ is identity.
	\end{proof}
	
	\begin{lemma}\label{lemma:doubleuniversal}
		Let $K \in M_{n^2}(\mathbb{C})$, $n\geq 3$. If the matrix $K_U = (U \otimes U^\ast) K (U \otimes U^\ast)^\dagger$ is a monomial matrix for every unitary $U \in U(n)$, then $K = \lambda I ~(\lambda\in\mathbb{C})$.
	\end{lemma}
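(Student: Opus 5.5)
The plan is to turn the pointwise hypothesis ``$K_U$ is monomial for every $U$'' into a statement about a $U(n)$-invariant linear subspace. Representation theory of $U\mapsto \mathrm{Ad}_{U\otimes U^\ast}$ should then force that subspace to consist of scalars.

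\textbf{Step 1 (a single permutation pattern).} For a permutation $\Pi$ of the $n^2$ basis vectors $\ket{ij}$, let $V_\Pi=\{\Pi D: D \text{ diagonal}\}$. This is a linear subspace, and the monomial matrices (together with their degenerate limits) lie in $\bigcup_\Pi V_\Pi$, a finite union. Each preimage $\{U: K_U\in V_\Pi\}$ is a real-algebraic closed subset of $U(n)$. These finitely many subsets cover $U(n)$, which is a connected, irreducible real-algebraic group. Hence one of them contains a nonempty open set, and by analyticity of $U\mapsto K_U$ it is all of $U(n)$. So there is a fixed $\Pi$ with $K_U\in V_\Pi$ for all $U$. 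Let $W=\mathrm{span}\{K_U\}$. Then $W\subseteq V_\Pi$, and $W$ is invariant under $\mathrm{Ad}_{U\otimes U^\ast}$. By differentiation, $W$ is also invariant under $\mathrm{ad}_{X\otimes I - I\otimes \bar X}$ for every anti-Hermitian $X$.

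\textbf{Step 2 (weights).} Restrict to diagonal $U=\mathrm{diag}(e^{i\theta_1},\dots,e^{i\theta_n})$. The matrix unit $\ket{ij}\bra{kl}$ has weight $e_i-e_j-e_k+e_l$. Hence $W$ splits into weight spaces, and each is spanned by matrix units inside the pattern $\Pi$. Take a weight vector $E\in W$ and apply $\mathrm{ad}_{X\otimes I-I\otimes\bar X}$ with $X=E_{ab}-E_{ba}$. This replaces an index $a$ by $b$ (or vice versa) in one of the four slots, and the pieces produced carry different weights. So each piece must separately lie in $W\subseteq V_\Pi$. Since $V_\Pi$ has exactly one nonzero entry per row, a row $\ket{ij}$ may carry only the single column $\Pi(ij)$. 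Using $n\ge 3$, so that a third index is available to separate terms that could otherwise cancel or collide, I will show the following. Any nonzero entry $\ket{ij}\bra{kl}$ with $(i,j)\neq(k,l)$ produces, after a suitable such commutator, two entries in one row or a nonzero entry off the pattern. It follows that $W$ consists of diagonal matrices and $\Pi$ acts trivially on the support.

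\textbf{Step 3 (conclusion).} We now have a $U(n)$-invariant space $W$ of diagonal matrices containing $K$. Take $D\in W$ and conjugate by $U\otimes U^\ast$ with $U$ a rotation in a coordinate plane $(a,b)$. The result is diagonal only if $D_{(a,j),(a,j)}=D_{(b,j),(b,j)}$ and $D_{(i,a),(i,a)}=D_{(i,b),(i,b)}$ for all $i,j$. Since this holds for all pairs $a,b$, $D$ is constant, so $D=\lambda I$. In particular, the other invariant of $U\otimes U^\ast$, the operator $\ket{\Omega}\bra{\Omega}$ with $\ket{\Omega}=\sum_i\ket{ii}$, is excluded because it is not diagonal. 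Hence $K=\lambda I$.

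\textbf{Main obstacle.} The hard part is Step 2: the combinatorial bookkeeping showing that off-diagonal weight vectors cannot survive. The danger is coincident weights, for example $\ket{ij}\bra{ij}$ versus $\ket{ii}\bra{jj}$-type entries, which all have weight zero. For these, first-order commutators may cancel inside a weight space, and this is exactly where $n\ge3$ should enter. For $n=2$ such cancellations might allow exotic solutions. If the first-order argument gets stuck on the weight-zero sector, I would instead apply the finite rotation by angle $\pi/4$ in the plane $(a,b)$ and compare the supports of $E$ and its image entrywise, with $c\notin\{a,b\}$ serving as an untouched reference index.
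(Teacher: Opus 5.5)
\textbf{There is a genuine gap: Step 2 carries the whole content of the lemma, and you only announce it.} Steps 1 and 3 are sound. Step 1, fixing a single pattern $\Pi$ via irreducibility and analyticity, is in fact more careful than the paper's appeal to continuity.

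Step 2 is not routine bookkeeping. After Step 1 you only know $W\subseteq V_\Pi$, and $\dim V_\Pi=n^2>n^2-1=\dim\mathbf{Adj}$, so no dimension count is available. The commutator mechanism you describe also fails in two concrete places.
\begin{enumerate}
\item \emph{It cannot see invariant vectors.} $xI+y|\Omega\rangle\langle\Omega|$ has weight zero and off-diagonal entries $|kk\rangle\langle ll|$, yet every $\mathrm{ad}_{X\otimes I-I\otimes\bar X}$ annihilates it. It has to be excluded by direct inspection: for $y\neq0$, row $|kk\rangle$ has $n-1\ge2$ nonzero entries. For $n=2$ it is not excluded at all. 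There $|\Omega\rangle\langle\Omega|-I=(\epsilon\otimes\epsilon)\mathrm{SWAP}$, with $\epsilon$ the $2\times2$ antisymmetric matrix, is $U\otimes U^\ast$-invariant and monomial. So $n\ge3$ enters here, not through cancellations among first-order commutators.
\item \emph{A per-entry argument misses some $\mathbf{Adj}$ copies.} For $n\ge3$, $M_{n^2}$ contains four independent copies of $\mathbf{Adj}$, via $A\mapsto A\otimes I$, $I\otimes A^T$, $(A\otimes I)|\Omega\rangle\langle\Omega|$ and $|\Omega\rangle\langle\Omega|(A\otimes I)$. For $n=3$, the combination with coefficients $(1,1,-1,-1)$ sends each root vector $E_{ab}$ to $|ac\rangle\langle bc|+|cb\rangle\langle ca|-|ab\rangle\langle cc|-|cc\rangle\langle ba|$, where $c$ is the third index. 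Each such image has at most one entry per row and column. The copy is excluded only by comparing different elements of $W$: the images of $E_{12}$ and $E_{21}$ put row $|33\rangle$ in columns $|21\rangle$ and $|12\rangle$.
\end{enumerate}
Your claim that ``any off-diagonal entry produces, after a suitable commutator, two entries in one row'' captures neither phenomenon.

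\textbf{How to complete your route.} Every irreducible constituent of $M_{n^2}$ other than the trivial one and $\mathbf{Adj}$ has dimension larger than $n^2$: these are $10,\overline{10},27$ for $n=3$, and of order $n^4/4$ for $n\ge4$. So $W$ is a sum of trivial pieces, handled as above, and $\mathbf{Adj}$ pieces. Write an $\mathbf{Adj}$ piece as $\{\alpha\phi_1(A)+\beta\phi_2(A)+\gamma\phi_3(A)+\delta\phi_4(A)\}$ and inspect row $|ab\rangle$ and column $|ba\rangle$ of the image of $E_{ab}$.
\begin{itemize}
\item For $n\ge4$ this forces $\gamma=\delta=0$ and $\alpha\beta=0$. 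The surviving copies fail because two root vectors put one row in different columns, e.g.\ $E_{ab}\otimes I$ and $E_{ac}\otimes I$ in row $|aj\rangle$.
\item For $n=3$ the only extra solution is the $(1,1,-1,-1)$ copy above, which fails as shown.
\end{itemize}

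\textbf{How the paper avoids this.} Since $K_U$ maps the basis vector $e_1\otimes e_1^\ast$ to a multiple of a basis vector, $K$ maps every $x\otimes x^\ast$ to a product vector. A polynomial-identity argument extends this to every $z\otimes w$. Westwick's theorem then gives $K=A\otimes B$ or $(A\otimes B)\mathrm{SWAP}$. The pattern-versus-irreducible dimension count is then carried out in $M_n$, where the pattern space has dimension $n$, below $n^2-1$ and $n(n+1)/2$. The antisymmetric case with $n=3$ is checked by hand.
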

	\begin{proof}
		Denote the base of $\calC$ as $\{e_i\}$.
		For any $x\in\calC$, we choose a unitary $U$ such that $x=U^\dagger e_1$.
		Then $ K(x\otimes x^\ast) = K(U^\dagger \otimes U^T) (e_1\otimes e_1^\ast) = (U \otimes U^\ast)^\dagger K_U (e_1\otimes e_1^\ast)$, which has the form of $y_1\otimes y_2$.
		
		We claim that this implies that $K$ sends all product states to product states.
		Let us define a map $f: \mathbb{C}^{n}\times \mathbb{C}^{n} \to M_{n^2}(\mathbb{C})$ as:
		\begin{equation}
			f(z,w)=K(z\otimes w),
		\end{equation}
		where we view the r.h.s. as an $n\times n$ matrix.
		$K(z\otimes w)$ has tensor product form if and only if its rank (as a matrix) is 1, if and only if the determinants of all $2\times 2$ submatrices vanish.
		For any $2\times 2$ determinant $g$, consider $g\circ f: \mathbb{C}^{n}\times \mathbb{C}^{n} \to\mathbb{C}$, which is a quadratic polynomial of $z$ and $w$.
		We already proved that $(g\circ f)(z,\bar{z})=0,~~\forall z\in\mathbb{C}^n$. 
		Define $p: \mathbb{C}^{n}\times \mathbb{C}^{n} \to\mathbb{C}$ as $p(x,y)=(g\circ f)(x+iy,x-iy)$, then $p(x,y)=0,~~\forall x,y\in\mathbb{R}^n$.
		This implies $p$ is a zero polynomial, so does $g\circ f$.
		This holds for any $2\times 2$ determinant $g$, hence $K(z\otimes w)$ has tensor product from for any $z$ and $w$.

		Hence $K$ has the form of $A\otimes B$ or $(A\otimes B)\text{SWAP}$ \cite{westwick1967transformations}.
		
		(1) $K=A\otimes B$. Then $K_U = U A U^\dagger \otimes  U^\ast B U^T$.
		Without loss of generality, we can assume $A$, $B$, $U A U^\dagger$, and $U^\ast B U^T$ are all monomial\footnote{It is obvious that $X\otimes Y$ is monomial if and only if there exists $\lambda\in\mathbb{C}$ such that both $\lambda X$ and $\lambda^{-1}Y$ are monomial.}.
		We claim that both $A$ and $B$ are scalar matrices.
		In fact, $A\mapsto UAU^\dagger$ is the representation $\textbf{n}\otimes\bar{\textbf{n}}$, which decomposes to $\textbf{1}\oplus\textbf{Adj}$.
		Due to continuity, $U A U^\dagger$ must have the same permutation type as $A$, hence is contained in an $n$-dimensional subspace of $\textbf{n}\otimes\bar{\textbf{n}}$.
		Projecting this subspace to \textbf{Adj}, the dimension is still at most $n$.
		However, if $A$ has any component in \textbf{Adj}, due to irreducibility, such projected space should have dimension $n^2-1>n$, a contradiction\footnote{The argument here gives an alternative proof of Corollary 4.2 of Ref.~\cite{beverland2016protected}.}.
		
		(2) $K=(A\otimes B)\text{SWAP}$. Then $K_U = (U A U^T\otimes U^\ast B U^\dagger) \text{SWAP}$.
		Similarly, we can assume $A$, $B$, $U A U^T$, and $U^\ast B U^\dagger$ are all monomial.
		We claim that such $A$ (and $B$) cannot exist.
		In fact, $A\mapsto UAU^T$ is the representation $\textbf{n}\otimes\textbf{n}$, which decomposes to $\textbf{Sym}\oplus\textbf{Alt}$.
		By the same argument as in (1), since $\dim(\textbf{Sym})=\frac{n(n+1)}{2}>n$, $A$ cannot have any component in \textbf{Sym}.
		Hence $A$ is anti-symmetric.
		If $n\geq 4$, then $\dim(\textbf{Alt})=\frac{n(n-1)}{2}>n$, implying that $A$ does not exist.
		If $n=3$, then it is straightforward to check that anti-symmetric $3\times 3$ monomial matrix does not exist either.
	\end{proof}
	
	We remark in passing that, rather than using \cref{thm:doubleuniversal}, it would be easier to directly apply corollary 4.2 of Ref.~\cite{beverland2016protected} (that universality of MCG implies no LPU gates) to rule out LPU gates if one have a quantum double or string-net model with universal MCG.
	However, this is impossible.
	\begin{proposition}\label{thm:nonchiraluniversal}
		If $\calT$ is equivalent to $\bar{\calT}$, then the MCG on a closed surface cannot be universal.
	\end{proposition}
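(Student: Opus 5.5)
The key observation is that when $\calT\simeq\bar{\calT}$, the MCG representation can be conjugated into a real (orthogonal) form, or at least into a form preserving a bilinear form. Its image is then contained in a proper closed subgroup of $PU(\calC)$, so it cannot be dense.

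The first step is to produce, from the equivalence $\calT\simeq\bar{\calT}$, an antiunitary operator $\Theta$ on $\calC$ commuting projectively with every $U(h)$. Concretely, time reversal (complex conjugation of all $F$- and $R$-symbols) sends the representation $U(h)$ to $U^\ast(h)$ on the same label basis, since the $S$ and $T$ matrices and the Dehn-twist and $F$-move data get conjugated. A braided equivalence $\calT\to\bar{\calT}$ then induces a unitary $W$ on $\calC$, built from the anyon-label permutation together with gauge-transformation phases, such that
\begin{equation}
	W\,U(h)\,W^\dagger = \lambda(h)\, U^\ast(h)\quad\forall h\in\text{MCG},
\end{equation}
with $\lambda(h)$ a phase. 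Equivalently, $\Theta = W^\dagger K$ (where $K$ is complex conjugation) satisfies $\Theta U(h)\Theta^{-1}\propto U(h)$.

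The second step is to show that the projective image then lies in a proper closed subgroup. The set $G=\{V\in U(\calC): WVW^\dagger\propto V^\ast\}$ is closed, contains the image, and is stable under the phase quotient. It remains to show $G\neq U(\calC)$ modulo phases whenever $\dim\calC\geq 2$. For a generic $V$, the map $V\mapsto WVW^\dagger(V^\ast)^{-1}$ is not a scalar. For instance, the condition forces $\operatorname{spec}(V)$ to be invariant under complex conjugation up to an overall phase, and a diagonal $V=\mathrm{diag}(1,e^{i\theta_2},\dots)$ with generic phases violates this once $\dim\calC\ge 2$. (If $\dim\calC=1$, universality is vacuous or trivially false by convention.) Hence the image is not dense in $PU(\calC)$, and the MCG is not universal.

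The main obstacle is the first step, which needs to rigorously turn $\calT\simeq\bar{\calT}$ into the intertwiner $W$ for the full MCG on genus $g$, not only for the modular $S,T$ data. I would use the standard construction of the MCG representation from the pants-decomposition basis, with Dehn twists acting diagonally by $\theta_a$ and $F$/$S$-moves acting by the categorical data. Under a braided equivalence $\calT\to\bar{\calT}$, these generators are mapped to their complex conjugates up to the gauge unitaries and the label permutation, because $\theta_{\bar a}=\theta_a^\ast$ and the $F,S$ data transform covariantly. Projectivity, coming from the central charge factors, is absorbed into $\lambda(h)$. I would state this functoriality of the Reshetikhin--Turaev representation under equivalence as the cited input, since it is standard.
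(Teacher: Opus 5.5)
\textbf{There is a genuine gap in your second step.} You claim that $G\neq U(\calC)$ modulo phases whenever $\dim\calC\geq 2$. This is false at $\dim\calC=2$.

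With $\epsilon=\begin{pmatrix}0&1\\-1&0\end{pmatrix}$, one has $\epsilon V\epsilon^\dagger=\det(V)\,V^\ast$ for every $V\in U(2)$; the fundamental representation of $SU(2)$ is pseudoreal. So if your intertwiner $W$ is antisymmetric, $G$ is all of $U(2)$.

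Your spectral test fails for the same reason. Any spectrum $\{a,b\}$ of a $2\times 2$ unitary satisfies $ab\cdot\{\bar a,\bar b\}=\{b,a\}$, so $\mathrm{diag}(1,e^{i\theta})$ passes the test for every $\theta$. Likewise, the ``bilinear form'' picture in your opening does not give a proper subgroup here, because $Sp(1)=SU(2)$.

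Put differently, the relation you extract from $\calT\simeq\bar{\calT}$ is projective equivalence of $\rho$ and $\bar{\rho}$. In dimension two, every projective representation into $PU(2)$ satisfies this automatically, so it cannot by itself exclude a dense image.

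For $\dim\calC\geq 3$ your spectral argument is correct. The conditions $\theta_{\sigma(k)}+\theta_k=\mathrm{const}$ for $k=1,2,3$ force a nontrivial linear relation among generic phases. This is equivalent to the paper's observation: $UVU^T\propto V$ for all $U$ would give a one-dimensional subrepresentation of $\mathbf{n}\otimes\mathbf{n}=\mathbf{Sym}\oplus\mathbf{Alt}$, and such a subrepresentation exists only for $n=2$, where $\mathbf{Alt}\cong\det$.

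\textbf{What you are missing is a separate input showing that universality forces $\dim\calC\geq 3$.} The paper establishes this before doing any representation theory.
\begin{enumerate}
\item On the torus, $\dim\calC=\mathrm{rank}(\calT)$, which can equal $2$. However, the projective $SL(2,\mathbb{Z})$ representation of any modular category has finite image (the Ng--Schauenburg congruence theorem). So universality can only occur at genus $g\geq 2$.
\item At genus $g\geq 2$, Verlinde's formula gives $\dim\calC=\sum_i(\mathcal{D}/d_i)^{2g-2}>\mathrm{rank}(\calT)\geq 2$, since each summand exceeds $1$ once the rank is at least $2$.
\end{enumerate}
Insert this lemma, and take your first step as cited input; the paper asserts $\rho\cong\bar{\rho}$ at the same level of rigor. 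With both in place, your argument gives a complete proof, differing from the paper's only in replacing the $\mathbf{n}\otimes\mathbf{n}$ decomposition with the spectral test.
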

	\begin{proof}[Proof sketch]
		Denote the ground state subspace of theory $\calT$ as $\calC$.
		Similar to \cref{thm:doubleuniversal}, universality implies $\dim\calC\geq 3$.
		The theory $\calT$ defines a representation of the MCG, denoted by $\rho$.
		The representation corresponding to $\bar{\calT}$ is $\bar{\rho}$.
		If $\calT\cong\bar{\calT}$, then $\rho\cong\bar{\rho}$.
		Namely, $\rho$ is pseudoreal: there exists a unitary $V$, such that
		\begin{equation}
			U(h) \propto V U^\ast(h) V^\dagger,~~\forall h\in\text{MCG}. 
		\end{equation}
		Hence $U(h)VU^T(h)\propto V$.
		Universality of $\rho$ and continuity implies that $UVU^T\propto V$ for $\forall U\in U(\calC)$.
		However, $\textbf{n}\otimes\textbf{n}=\textbf{Sym}\oplus\textbf{Alt}$ does not contain any 1-dimensional representation when $n\geq 3$.\footnote{
			One can also view \cref{thm:nonchiraluniversal} as a corollary of \cref{thm:doubleuniversal}, since $(V\otimes V^\dagger)\text{SWAP}$ would be a nontrivial LPU gate for $\calT \boxtimes\Bar{\calT}$.
		}
	\end{proof}

	\subsubsection{Via Symmetries in TQFT}
	We provide a physical, TQFT-based argument demonstrating that LPU gates cannot exist in certain quantum double models. In the language of TQFT, locality preserving logical gates correspond to symmetries of the underlying theory. In two spatial dimensions, such a symmetry can be either a 0-form or a 1-form symmetry.
	
	1-form symmetries manifest as string-like logical operators, corresponding to moving an anyon along a non-trivial loop. 
	We need these anyons be Abelian, as non-Abelian string operators require linear depth to implement \cite{bravyi2022adaptive}.
	For quantum doubles $D(G)$, non-trivial Abelian anyons correspond to non-trivial central elements of $G$, non-trivial one-dimensional representations of $G$, or both \cite{kitaev2003fault}. 
	Both are forbidden if $G$ is nonabelian and simple.
	
	0-form symmetries correspond to anyon permutations, or braided autoequivalences in the category language.
	(Physically this correspondence is realized by ``sweeping the domain wall" \cite{yoshida2015topological,yoshida2017gapped}.)
	For quantum doubles $D(G)$, these are described by the Brauer-Picard group of $G$ \cite{nikshych2014categorical}.
	It is known that $\text{BrPic}(G)$ can be trivial for several simple groups.
	
	To summarize, it suffices to select a simple group $G$ with trivial Brauer-Picard group to exclude these symmetries.
	Concrete examples include the Mathieu group $M_{11}$ and the Monster group \cite{nikshych2014categorical}.

	\section{Discussion and Outlook}
	
	We introduce two-sided long-range magic, a robust form of nonstabilizerness defined by a state's inability to be prepared by Clifford augmented shallow circuits ($\mathsf{FDU}\circ\mathsf{Clifford}$ or $\mathsf{Clifford}\circ \mathsf{FDU}$). 
	We establish this property in concrete and physically distinct examples, including the ZX-cat state and ground states of certain nonabelian topological orders. 
	These results expand the known boundary of shallow state-preparation complexity and forge a rigorous connection between computational complexity and the intrinsic structure of topological phases.

	Looking ahead, our work motivates several intriguing directions for future research:
	
	\textbf{Higher Levels of Magic Hierarchy}.
	A compelling perspective is that the magic hierarchy defines of a stronger many-body magic: if we postulate that both Clifford circuits and FDUs do not promote short-range magic to long-range magic, then the entire magic hierarchy only produce short-range magic as well.
	
	Our results in this work concern the first level of the magic hierarchy. 
	A natural next step is to construct explicit states and prove they are outside higher levels.
	In appendix \ref{sec:QCQ}, we show that a ZX-cat-like state in fact lies in the second level of the hierarchy.
	For nonabelian topological orders, the macroscopic nature of our TQFT-based arguments suggests a generalized proof against $\mathsf{FDU} \circ \mathsf{Clifford} \circ \mathsf{FDU}$ may be feasible.
	However, progressing substantially further will likely require new tools.
	
	\textbf{The Landscape of Topological Magic}.
	Our proof for nonabelian topological orders leverages the (approximated) error-correcting code property and hence requires ground state degeneracy.
	We conjecture that the same conclusion holds for the states on the sphere.
	Proving this would likely necessitate a more intrinsic approach, directly utilizing the modular data.
	Similar story already exists for long range entanglement \cite{haah2016invariant,aharonov2018quantum,li2025much}.
	Such proof might also relax the no TTG condition and improve the tolerated approximation error.
	
	\textbf{Adaptive Circuits}.
	A key motivation for studying the magic hierarchy is its connection \cite{parham2025quantum} to adaptive circuits (those with measurement and feedforward), which are known to be powerful~\cite{verresen2021efficiently,tantivasadakarn2024long,bravyi2022adaptive,lu2022measurement}.
	For the states considered in this work, an interesting divergence emerges.
	The ZX-cat state can be prepared efficiently using constant-depth adaptive circuit (see appendix \ref{sec:QCQ}).
	In contrast, the adpative preparation of general nonabelian orders are conjectured to be hard \cite{tantivasadakarn2023hierarchy}.
	Rigorously characterizing which states are resilient to adaptive preparation remains an open problem.

	\textbf{Connections to Fault Tolerance}.
	A recurring theme in quantum computation is the trade-off between fault-tolerant gate implementability and computational universality, exemplified by theorems such as Eastin–Knill \cite{eastin2009restrictions} and Bravyi-König \cite{bravyi2013classification}. 
	Our work provides a concrete manifestation of this principle: the absence of topologically transversal gates, a condition regarding fault-tolerant operations, enforces the presence of two-sided long-range magic, a form of non-Clifford computational resource.
	This interplay between the long-range magic and fault-tolerant structures clearly warrants further investigation.

	\textbf{Quantitative Measures and Unified Characterizations}.
	At present, our proofs are tailored to specific states and rely on different mechanisms in different settings. It would be desirable to develop a more unified framework capable of diagnosing two-sided long-range magic in a unified way. 
	Candidate approaches include developing quantitative measures that detect long-range magic directly---such as magic monotones or other types of invariants.

	\section*{Acknowledgement}
	I thank Michael Beverland for pointing out Ref.~\cite{beverland2016toward}, which is helpful in appendix \ref{sec:doubleFib}.
	I thank Sergey Bravyi for commenting on the adaptive preparation of ZX-states and pointing out Ref.~\cite{kim2025any}.
	I thank Robert König, Hari Krovi, Tsung-Cheng Lu, and Beni Yoshida for helpful discussions.
	Part of the proof of \cref{nologicalPauli} overlaps conceptually with a proof in a forthcoming collaborative work \cite{chiralpaper}. I thank my collaborators for agreeing to the prior release of this work.    
	
	\begin{appendix}
		
		\section{ZX-cat State: Alternative Proof}\label{sec:ZX2ndproof}
		In this section, we provide an alternative proof for preparing the ZX-cat state by $\mathsf{Clifford}\circ\mathsf{FDU}$.
		For the simplest case where the Clifford is trivial, we can observe that $\ket{\psi}$ is approximately locally indistinguishable with another ZX-cat-like state,
		\begin{equation}\label{eq:psiprime}
			\ket{\psi'}=\frac{1}{\sqrt{2\beta}}(\ket{0^n}-\ket{+^n}),~~~\beta=1-\frac{1}{2^n}.
		\end{equation}
		Hence, they form an approximated ``code", and hence requires nontrivial circuit depth to prepare. 
		We use quotation mark since they do not quite form an approximated code in the sense that the whole code space are locally indistinguishable.
		(In fact, $\ket{0^n}, \ket{+^n}\in\mathrm{span}\{\ket\psi,\ket{\psi'}\}$ are locally distinguishable).
		However, our \cref{aqeccomplexity,ccagsp} do apply.
		
		\subsection{Improved Bound for Approximated ``Codes" with Large Distances}
		\begin{lemma}\label{ccagsp}
			Suppose $\ket{\phi}$ and $\ket{\phi'}$ satisfy\footnote{The condition $\braket{\phi}{\phi'}=0$ can also tolerate constant error.} $\braket{\phi}{\phi'}=0$ and $\norm{\phi_A-\phi'_A}<\epsilon$ for any subset $A$ such that $|A|<d$, where $d=\Omega(n^{\frac{1}{2}+\alpha})~(\alpha>0)$ and $\epsilon=\exp(-\Omega(d))$.
			Then for a small enough constant $\delta>0$, we have
			\begin{equation}
				\cc_\delta(\phi)=\Omega(\log n).
			\end{equation}
		\end{lemma}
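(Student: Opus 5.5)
The plan is to redo the argument of \cref{aqeccomplexity}, but replace the first-moment (energy) estimate by a polynomial filter, in the spirit of an approximate ground-state projector. With a filter, a constant $\delta$ no longer has to beat the ratio $d/n$. Concretely, suppose $\norm{U\phi U^\dagger-\ketbra{0^n}}_1<\delta$ for a circuit $U$ of depth $t$. As in \cref{aqeccomplexity}, $U\ket{\phi}$ and $U\ket{\phi'}$ are orthogonal. They are $\epsilon$-indistinguishable on every region of size less than $d'\defeq d/2^t$, because backward light cones blow up supports by at most a factor $2^t$. The goal is to show that $d'$ must be $O(\sqrt{n}\,\mathrm{polylog}\, n)$. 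Since $d=\Omega(n^{1/2+\alpha})$, this forces $2^t\geq n^{\alpha-o(1)}$, which gives $t=\Omega(\log n)$.

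Take the single-site Hamiltonian $H=\sum_{i=1}^n(1-\ketbra{0}_i)$. Its spectrum is $\{0,1,\dots,n\}$ and its unique ground state is $\ket{0^n}$. Let $p$ be a rescaled Chebyshev polynomial of degree $k$ with $p(0)=1$ and $\abs{p(x)}\leq\eta$ for $x\in[1,n]$. Standard Chebyshev bounds give $k=O(\sqrt{n}\log(1/\eta))$, and I would take $\eta$ to be a small constant. Then $p(H)=\ketbra{0^n}+R$ with $\norm{R}_\infty\leq\eta$, so for any state $\chi$ we have $\abs{\expval{p(H)}{\chi}-\abs{\braket{0^n}{\chi}}^2}\leq\eta$.

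The next step is to transfer this from $U\phi$ to $U\phi'$. Expand $p(H)$ into products of at most $k$ single-site terms. There are at most $(n+1)^k$ such products, and each is supported on at most $k$ qubits, with coefficients bounded by $\exp(O(k\log n))$ (the Chebyshev coefficients rescaled to $[0,n]$). If $k<d'$, every term has the same expectation value in $U\phi$ and $U\phi'$ up to $\epsilon$. The total difference is then at most $\epsilon\cdot\exp(O(k\log n))=\exp(-\Omega(d)+O(\sqrt n\log^2 n))$, which is negligible because $d=\Omega(n^{1/2+\alpha})$.

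From the hypothesis, $\abs{\braket{0^n}{U\phi}}^2\geq 1-\delta$. This gives $\expval{p(H)}{U\phi}\geq1-\delta-\eta$, hence $\expval{p(H)}{U\phi'}\geq 1-\delta-\eta-o(1)$, and therefore $\abs{\braket{0^n}{U\phi'}}^2\geq1-\delta-2\eta-o(1)$. For small constants $\delta,\eta$, both overlaps with $\ket{0^n}$ exceed $1/2$, which contradicts $\braket{\phi}{\phi'}=0$. This also tolerates a small constant error in the orthogonality assumption, as the footnote claims. So the contradiction holds whenever $k<d'$, that is, whenever $2^t\lesssim d/(\sqrt n\log(1/\eta))$. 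Hence $\cc_\delta(\phi)\geq\log\bigl(d/(C\sqrt n)\bigr)=\Omega(\log n)$.

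The main obstacle is controlling the coefficient blow-up when $p(H)$ is expanded into local terms. The Chebyshev polynomial evaluated on an operator of norm $n$ has large monomial coefficients. I expect these to be $e^{O(k\log n)}=e^{O(\sqrt n\log^2 n)}$, which $\epsilon=\exp(-\Omega(d))$ absorbs, but this has to be checked carefully. If it fails, I would instead write $p(H)$ as a product of $k$ factors $(1-H/x_j)$ and bound the trace-norm contributions region by region.
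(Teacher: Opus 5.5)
Your proposal is correct and follows essentially the same route as the paper: the same Chebyshev approximate ground-state projector for $\sum_i \ket{1}\bra{1}_i$, term-by-term transfer of its expectation value between the two states, a contradiction with orthogonality, and the replacement $d\to d/2^t$ under conjugation by the depth-$t$ circuit. The one difference is in the step you flagged as the main obstacle. The paper notes that all roots of the filter polynomial are positive, so its coefficients alternate in sign and $\sum_k |a_k| n^k = |P(-n)| = e^{O(m)}$; this lets it push the degree up to $\min\{d-1, c\log(1/\epsilon)\}$ and obtain the sharper bound $\abs{\mel{\phi}{U}{0^n}}^2 \le \frac12 + O(e^{-\Theta(n^\alpha/2^t)})$, whereas your cruder $e^{O(k\log n)}$ estimate with a constant-accuracy filter of degree $O(\sqrt n)$ already suffices because $d = \Omega(n^{1/2+\alpha})$ dominates $\sqrt n \log n$.
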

		The proof is based on the approximated ground state projector (AGSP) method, inspired by lemma 14 of Ref.~\cite{anshu2020circuit}.
		
		\begin{proof}
			Consider $G=\sum_{i=1}^n G_i$ where $G_i=\ket{1}\bra{1}_i$. Its spectrum is $\{0,1,\cdots,n\}$, and $\ket{0^n}$ is the unique ground state, hence
			\begin{equation}
				\ket{0^n}\bra{0^n}=P_0(G),
			\end{equation}
			where $P_0:\{0,1,\cdots,n\}\to \{0,1\}$ is a ``step function": $P_0(0)=1$ and $P_0(i)=0$ for $i>0$. 
			We use a polynomial $P(x)$ of degree $m< d$ (specified later) to approximate $P_0(x)$.
			For our purpose, the following polynomial in Ref.~\cite{arad2013area} suffices and is convenient:
			\begin{equation}
				P(x)=\frac{T_m(\frac{n+1-2x}{n-1})}{T_m(\frac{n+1}{n-1})},
			\end{equation}
			where $T_m(x)$ is the degree-$m$ Chebyshev polynomial. 
			It satisfies $P(0)=1$ and $|P(x)-P_0(x)|\leq 2e^{-\frac{2m}{\sqrt{n}}}$ for $x\in [1,n]$ \cite{arad2013area}, hence
			\begin{equation}\label{eq:AGSP}
				\norm{\ket{0^n}\bra{0^n}-P(G)}_\infty \leq 2e^{-\frac{2m}{\sqrt{n}}}.
			\end{equation}
			
			Now, for the given $\forall\ket{\phi}, \ket{\phi'}$, we have:
			\begin{equation}
				\begin{aligned}
					\expval{\phi|P(G)|\phi} - \expval{\phi'|P(G)|\phi'}
					&=\sum_{k=0}^m a_k \left(
					\mel{\phi}{(\sum_{i=1}^n G_i)^k}{\phi}-\mel{\phi'}{(\sum_{i=1}^n G_i)^k}{\phi'}
					\right)
					\leq \frac{1}{2}\sum_{k=0}^m   |a_k|n^k\epsilon     .
				\end{aligned}
			\end{equation}
			The last step is because there are $n^k$ terms after expanding $(\sum_{i=1}^n G_i)^k$ and each term is a projector supported on at most $k$ sites.
			Combining it with \cref{eq:AGSP}, we get:
			\begin{equation}\label{eq:FTTboundviaAGSP}
				\abs{\expval{\phi|0^n}}^2 - \abs{\expval{\phi'|0^n}}^2 
				=  \expval{\phi|0^n}\expval{0^n|\phi}- \expval{\phi'|0^n}\expval{0^n|\phi'}
				\leq 
				4e^{-\frac{2m}{\sqrt{n}}}+\frac{1}{2}\sum_{k=0}^m |a_k|n^k\epsilon.
			\end{equation}
			All roots of $T_d(x)$ lie in $[-1,1]$, hence all roots of $P(x)$ are positive, hence the coefficients of $P(x)$ alternate signs. 
			Therefore,
			\begin{equation}
				\sum_{k=0}^m |a_k|n^k
				=\abs{P(-n)}
				= \abs{\frac{T_m(\frac{3n+1}{n-1})}{T_m(\frac{n+1}{n-1})}}
				= e^{O(m)}.
			\end{equation}
			The last step can be verified by the explicit formula $T_m(x)=\cosh(m\,\mathrm{arccosh}(x))$ for $x\geq 1$.
			
			We pick $m=\min\{d-1,c\log\frac{1}{\epsilon}\}$, where $c$ is a small enough absolute constant so that $e^{O(m)}\epsilon$ is $O(\sqrt{\epsilon})$, hence is still $\exp(-\Omega(d))$.
			It follows that:
			\begin{equation}\label{eq:agspphipsi}
				\abs{\expval{\phi|0^n}}^2 - \abs{\expval{\phi'|0^n}}^2
				\leq 
				4e^{-\frac{2m}{\sqrt{n}}} + O(\sqrt{\epsilon}) \lesssim e^{-\frac{2m}{\sqrt{n}}}.
			\end{equation}
			Since $\braket{\phi}{\phi'}=0$, we have $\abs{\expval{\phi'|0^n}}^2 + \abs{\expval{\phi|0^n}}^2\leq 1$.
			Combining it with \cref{eq:agspphipsi}, we get:
			\begin{equation}
				\abs{\expval{\phi|0^n}}^2 \leq \frac{1}{2} + O(e^{-\frac{2m}{\sqrt{n}}}).
			\end{equation}
			
			Now we conjugate the subspace $\calC$ by a depth-$t$ unitary circuit $U^\dagger$ and repeat the above argument (namely, replace $d$ by $\frac{d}{2^t}$).
			We get:
			\begin{equation}\label{eq:agspphiU0}
				\abs{\mel{\phi}{U}{0^n}}^2
				\leq \frac{1}{2} + O(        e^{-\Theta(\frac{n^\alpha}{2^{t}})}).
			\end{equation}
			The l.h.s. should be greater than $1-\delta$ by assumption, hence $t= \Omega(\log n)$.    
		\end{proof}
		
		\textbf{Remark}: 
		Furthermore, if there is a subspace $\calC$ ($\dim\calC\geq 2$) such that $\norm{\phi_A-\phi'_A}_1<\epsilon$ for any $\ket{\phi}, \ket{\phi'}\in \calC$ and any subset $A$ such that $|A|< d$, then \cref{eq:agspphiU0} can be improved to:
		\begin{equation}
			\abs{\mel{\phi}{U}{0^n}} \lesssim
			e^{-\Theta(\frac{n^\alpha}{2^{t}})},~~~~\forall \ket{\phi}\in\calC.
		\end{equation}
		Indeed, we can always pick a $\ket{\phi'}\in \calC$ such that $\braket{\phi'}{0^n}=0$, hence the l.h.s. of \cref{eq:agspphipsi} becomes $\abs{\expval{\phi|0^n}}^2$.
		Repeating the argument for $U^\dagger \calC$ gives the desired result.
		
		\subsection{ZX-cat States as Approximated ``Codes"}
		Recall that we have defined $\ket{\psi}$ and $\ket{\psi'}$ as in \cref{eq:psi,eq:psiprime}.
		Assuming $\ket{\psi}=CU\ket{0^n}$ by contradiction, as done in the main text, we define  $\ket{\phi} = C^\dagger \ket{\psi}$ and $\ket{\phi'}=C^\dagger \ket{\psi'}$, so
		\begin{equation}
			\ket{\phi}=\frac{1}{\sqrt{2\alpha}}(\ket{\phi_1}+\ket{\phi_2}),~~~~
			\ket{\phi'}=\frac{1}{\sqrt{2\beta}}(\ket{\phi_1}-\ket{\phi_2}).
		\end{equation}
		Our task is to show that $\ket{\psi}$ requires large circuit depth to prepare.
		
		It is clear that $\braket{\phi}{\phi'} = \braket{\psi}{\psi'} = 0$.
		Furthermore, we claim that $\ket{\phi}$ and $\ket{\phi'}$ are approximately locally indistinguishable.
		In fact, let us compute the difference of $\expval{V}{\phi}$ and $\expval{V}{\phi'}$ for a local operator $V$. 
		Using \cref{lem:twostab}, we have:
		\begin{align}
			\expval{V}{\phi} - \expval{V}{\phi'}
			=&(\frac{1}{2\alpha}-\frac{1}{2\beta})\left( \expval{V}{\phi_1}+\expval{V}{\phi_2} \right)
			+ (\frac{1}{2\alpha}+\frac{1}{2\beta})\left( \mel{\phi_1}{V}{\phi_2} +\mel{\phi_2}{V}{\phi_1} \right) \nonumber \\
			=& O(2^{a-\frac{n}{2}}),
		\end{align}
		where $a=|\supp(V)|$.
		
		Now applying \cref{ccagsp} with $d=\frac{n}{3}$ and $\epsilon=O(e^{-\frac{n}{6}})$, we get $\cc_\delta(\psi)=\Omega(\log n)$ for a constant $\delta$.
		(Alternatively, picking $d=\frac{n}{3}$ and applying \cref{aqeccomplexity}, we find $\cc_\delta(\psi)=\Omega(\log n)$ for $\delta=\frac{1}{n^\alpha}$ ($\alpha>0$).)

		\section{Preparation of ZX-cat State}\label{sec:QCQ}
		
		\subsection{2nd-Level Preparation of a ZX-cat-like State}
		We claim that the following ZX-cat-like state
		\begin{equation}
			\ket{\tilde\psi}= \frac{1}{\sqrt{2}}(\ket{0^n}+i\ket{+^n})
		\end{equation}
		can be prepared by $\mathsf{FDU}\circ\mathsf{Clifford}\circ \mathsf{FDU}$.
		
		To start, we note that
		\begin{equation}
			\ket{\tilde\psi}   = e^{i\frac{\pi}{4}H^{\otimes n}}\ket{0^n}.
		\end{equation}
		Define $U=e^{-i\frac{\pi}{8}Y}$ (which is Clifford-equivalent to a $T$ gate), then $UZU^\dagger = H$, $U^{\otimes n}Z^{\otimes n}(U^\dagger)^{\otimes n} = H^{\otimes n}$.
		Therefore,
		\begin{equation}
			e^{i\frac{\pi}{4}H^{\otimes n}} = U^{\otimes n} e^{i\frac{\pi}{4}Z^{\otimes n}} (U^\dagger)^{\otimes n}.
		\end{equation}
		The gate $C=e^{i\frac{\pi}{4}Z^{\otimes n}}$ is a Clifford gate.
		In fact, $C=\frac{1}{\sqrt{2}}(1+iZ^{\otimes n})$, hence for any Pauli $P$,
		\begin{equation}
			C P C^\dagger = \frac{1}{2}(1+iZ^{\otimes n})P(1-iZ^{\otimes n}) 
			=\begin{cases}
				P,~~~\text{if~}[P,Z^{\otimes n}]=0\\
				iZ^{\otimes n}P,~~~\text{if~}\{P,Z^{\otimes n}\}=0
			\end{cases},
		\end{equation}
		which is always a Pauli.
		
		\subsection{Constant-depth Adaptive Preparation}
		We claim that the ZX-cat state $\ket{\psi}$ can be prepared probabilistically with a success probability greater than one-half.
		
		To do so, we first adaptively prepare GHZ on ancilla, which can be done in constant depth:
		\begin{equation}
			\ket{0^n}_a\ket{0^n} \to \frac{1}{\sqrt{2}}(\ket{0^n}_a+\ket{1^n}_a)\ket{0^n}.
		\end{equation}
		Then, we use apply parallel controlled-Hadamard gates $(CH)^{\otimes n}$ and obtain:
		\begin{equation}
			\frac{1}{\sqrt{2}}(\ket{0^n}_a\ket{0^n}+\ket{1^n}_a\ket{+^n}).
		\end{equation}
		Then we measure the ancilla in $X$ basis.
		Outcomes with even parity will correspond to the ZX-cat state $\ket{\psi}$.
		
		Relatedly, the state can be prepared using a controlled-$H^{\otimes n}$ gate and measurement, the former is conjugated to the Fanout gate \cite{kim2025any}, which can be realized adaptively in constant depth.
		
		\subsection{Matrix Product State Representation}
		The ZX-cat state $\ket{\psi}$ can be easily represented as a matrix product state (MPS) with bond dimension 2.
		
		To do so, we simply define the rank-3 tensor $A^{\alpha}_{ij}$ such that:
		\begin{equation}
			A_{00} = \ket{0},~~~A_{11}=\ket{+}~~~(\text{as vectors over the physical leg $\alpha$}).
		\end{equation}
		Equivalently,
		\begin{equation}
			A^{0} = \begin{pmatrix}
				1&0\\
				0&\frac{1}{\sqrt{2}}
			\end{pmatrix}
			,~~~
			A^{1} = \begin{pmatrix}
				0&0\\
				0&\frac{1}{\sqrt{2}}
			\end{pmatrix}
			.
		\end{equation}
		It is clear that the tensor $A$ gives rise to $\ket{\psi}$ using either periodic boundary condition or open boundary condition:
		\begin{equation}
			\ket{\psi} \propto \sum_{\{a_i\}} \Tr[A^{a_1}\cdots A^{a_n}]\ket{a_1\cdots a_n}
			=   \sum_{\{a_i\}} \ell^t A^{a_1}\cdots A^{a_n} r \ket{a_1\cdots a_n},
		\end{equation}
		where $\ell=r=(1,1)^t$.
		
		The tensor $A$ satisfies the following:
		\begin{equation}\label{eq:push}
			\begin{tikzpicture}[scale = 1, baseline = {([yshift=-.5ex]current bounding box.center)}] 
				\draw[color = black, thick] (-1, 0) -- (1, 0);
				\draw[color = black, thick] (0, 0) -- (0, -0.8);
				\draw[fill = yellow!30] (0,0) circle (0.3);
				\node at (0,0) {\small $A$};
			\end{tikzpicture}
			=
			\begin{tikzpicture}[scale = 1, baseline = {([yshift=-.5ex]current bounding box.center)}] 
				\draw[color = black, thick] (-1, 0) -- (1, 0);
				\draw[color = black, thick] (0, 0) -- (0, -0.8);
				
				\draw[fill = white] (-0.6, 0) circle (0.15);
				\node at (-0.6, 0) {\tiny $Z$};
				
				\draw[fill = yellow!30] (0,0) circle (0.3);
				\node at (0,0) {\small $A$};
				
				\draw[fill = white] (0.6, 0) circle (0.15);
				\node at (0.6, 0) {\tiny $Z$};
			\end{tikzpicture}
			=
			\begin{tikzpicture}[scale = 1, baseline = {([yshift=-.5ex]current bounding box.center)}] 
				\draw[color = black, thick] (-1, 0) -- (1, 0);
				\draw[color = black, thick] (0, 0) -- (0, -0.8);
				\draw[fill = white] (-0.6, 0) circle (0.15);
				\node at (-0.6, 0) {\tiny $X$};
				\draw[fill = yellow!30] (0,0) circle (0.3);
				\node at (0,0) {\small $A$};
				
				\draw[fill = white] (0.6, 0) circle (0.15);
				\node at (0.6, 0) {\tiny $X$};
				
				\draw[fill = white] (0, -0.5) circle (0.15);
				\node at (0, -0.5) {\tiny $H$};
			\end{tikzpicture}.
		\end{equation}
		Using this equation in mind, one can also use the following protocol to prepare $\ket{\psi}$ (see \cite{smith2023deterministic,smith2024constant,sahay2025classifying,stephen2025preparing,zhang2024characterizing}).
		\begin{enumerate}[label=\arabic*), nosep]
			\item Prepare individual $A$ tensors as 3-qubit states.
			\item Connect virtual legs using Bell measurements.
			\item For Bell measurements returning $X$, $Y$ or $Z$, push the errors towards one direction using \cref{eq:push}.
			\item If all errors cancel, we are done; otherwise, restart.
		\end{enumerate}

		\section{Gluing Two States}
		\begin{lemma}\label{glue}
			If two pure states $\ket{\psi}$ and $\ket{\psi'}$ on $ABCD$ satisfy:
			\begin{align}
				&\psi_{BC} = \psi'_{BC},\label{eq:B1}\\
				&I(A,CD)_\psi=I(AB,D)_{\psi'}=0,\label{eq:B2} \\
				&S(D)_\psi = S(D)_{\psi'},\label{eq:B3}
			\end{align}
			then there exists a unitary $U_A$ such that
			the state $\ket{\Psi}\defeq U_A\ket{\psi'}$ satisfies:
			\begin{equation}\label{eq:gluereq}
				\begin{aligned}
					&\Psi_{ABC}=\psi_{ABC},~~~\Psi_{BCD}=\psi'_{BCD},\\
					&I(A,CD)_\Psi=I(AB,D)_\Psi=0.
				\end{aligned}
			\end{equation}
		\end{lemma}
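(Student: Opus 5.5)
The plan is to use Uhlmann's theorem twice, once for each vanishing mutual information, and to stitch the resulting decompositions together on the common marginal $\psi_{BC}=\psi'_{BC}$.

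\textbf{Step 1: factorize $\ket{\psi}$.} Since $I(A,CD)_\psi=0$, we have $\psi_{ACD}=\psi_A\otimes\psi_{CD}$. Purify each factor: $\psi_A$ with a reference $B_1$, and $\psi_{CD}$ with a reference $B_2$. The purifications of $\psi_{ACD}$ are all related by isometries on the purifying system $B$. Hence there is an isometry $W:B_1B_2\to B$ with
\[
\ket{\psi}=W\bigl(\ket{a}_{AB_1}\otimes\ket{c}_{B_2CD}\bigr).
\]
In this form $\psi_{BC}=W(\alpha_{B_1}\otimes\gamma_{B_2C})W^\dagger$, where $\alpha_{B_1}$ and $\gamma_{B_2C}$ are the corresponding marginals of $\ket{a}$ and $\ket{c}$.

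\textbf{Step 2: factorize $\ket{\psi'}$.} Similarly, $I(AB,D)_{\psi'}=0$ gives $\psi'_{ABD}=\psi'_{AB}\otimes\psi'_D$. Purify $\psi'_D$ by a system $C_2$ to get $\ket{d'}_{C_2D}$, and purify $\psi'_{AB}$ by $C_1$ to get $\ket{e'}_{ABC_1}$. As before, an isometry $W':C_1C_2\to C$ gives
\[
\ket{\psi'}=W'\bigl(\ket{e'}_{ABC_1}\otimes\ket{d'}_{C_2D}\bigr).
\]
Condition \eqref{eq:B3} says that $\psi_D$ and $\psi'_D$ have the same spectrum, since $S(D)$ is fixed by the entanglement across $D$. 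I also want the stronger fact that the $D$-purifications of $\psi$ and $\psi'$ coincide up to a local unitary.

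\textbf{Step 3: apply Uhlmann on $BCD$ and build $U_A$.} By \eqref{eq:B1}, $\psi_{BC}=\psi'_{BC}$. Combined with Step 1, this lets me construct a target state
\[
\ket{\Psi}=W\bigl(\ket{a}_{AB_1}\otimes\ket{\tilde c}\bigr).
\]
The idea is to replace the $CD$ part $\ket{c}$ by the $CD$ content of $\psi'$, expressed through $W'$, so that $\Psi_{BCD}=\psi'_{BCD}$ holds automatically and the factor $\ket{a}_{AB_1}$ forces $\Psi_{ABC}=\psi_{ABC}$ together with $I(A,CD)_\Psi=0$.

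Once such a $\ket{\Psi}$ with $\Psi_{BCD}=\psi'_{BCD}$ exists, the unitary follows directly. Both $\ket{\Psi}$ and $\ket{\psi'}$ purify the same $BCD$ marginal, so Uhlmann's theorem gives a partial isometry on $A$ mapping one to the other. Embedding $A$ into a large enough space, or counting dimensions to check that the ranks of the $A$ marginals match, extends it to a unitary $U_A$ with $U_A\ket{\psi'}=\ket{\Psi}$. The rank match holds because the $A$ spectra equal the $BCD$ spectra, and those are equal.

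\textbf{Main obstacle: showing such a $\ket{\Psi}$ exists.} The task is to find a pure state on $ABCD$ whose $ABC$ marginal is $\psi_{ABC}$ and whose $BCD$ marginal is $\psi'_{BCD}$, with the two agreeing on $BC$. The conditions $I(A,CD)_\psi=0$ and $I(AB,D)_{\psi'}=0$ suggest a Markov-chain gluing: $A$ attaches to $BC$ only through $B_1$, and $D$ attaches to $BC$ only through $C_2$. The difficulty is to show that in $\psi_{BC}=\psi'_{BC}$ these two subsystems are compatible.

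I will try to prove that $\psi_{BC}$ decomposes as $\alpha_{B_1}\otimes\gamma_{B_2C_1}\otimes\delta_{C_2}$ after suitable identifications. Here $\alpha_{B_1}$ is the factor purified by $A$ and $\delta_{C_2}$ is the factor purified by $D$. Condition \eqref{eq:B3} should ensure that the $C_2$ factor extracted from $\psi'$ has the same spectrum as in $\psi$, so the rank and spectrum bookkeeping works out.

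Given this decomposition, set
\[
\ket{\Psi}=\ket{a}_{AB_1}\otimes\ket{g}_{B_2C_1}\otimes\ket{d'}_{C_2D},
\]
where $\ket{g}_{B_2C_1}$ purifies $\gamma_{B_2C_1}$. All required marginals and both vanishing mutual informations in \eqref{eq:gluereq} can then be read off directly.
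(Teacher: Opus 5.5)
There is a genuine gap, and it sits exactly where the hypothesis $S(D)_\psi=S(D)_{\psi'}$ must be used.

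Your final state $\ket{a}_{AB_1}\otimes\ket{g}_{B_2C_1}\otimes\ket{d'}_{C_2D}$ only makes sense if the middle factor $\gamma_{B_2C_1}$ is \emph{pure}. A vector on $B_2C_1$ alone cannot purify a mixed state of $B_2C_1$, and no spare system is left in $ABCD$. The same purity is needed to get $\Psi_{ABC}=\psi_{ABC}$. You never establish this, and you give the entropy hypothesis a different role that does not work:
\begin{itemize}
\item Equal von Neumann entropies do not imply equal spectra, so ``$\psi_D$ and $\psi'_D$ have the same spectrum'' does not follow.
\item The agreement of the $C_2$ factors that you want from it is automatic from $\psi_{BC}=\psi'_{BC}$ once both states are written in a common frame.
\end{itemize}
The missing step is an entropy count. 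Work in the frame where $\psi_{BC}=\alpha_{B_1}\otimes\gamma_{B_2C_1}\otimes\delta_{C_2}$. Purity of $\psi$ and $I(A,D)_\psi\le I(A,CD)_\psi=0$ give
\[
S(\alpha)+S(\gamma)+S(\delta)=S(BC)_\psi=S(AD)_\psi=S(A)_\psi+S(D)_\psi=S(\alpha)+S(D)_{\psi'}=S(\alpha)+S(\delta).
\]
The last two equalities use that $A$ purifies $B_1$ in $\psi$, the entropy hypothesis, and that $D$ purifies $C_2$ in $\psi'$. Hence $S(\gamma)=0$.

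This step is essential, not bookkeeping. Any pure $\Psi$ with the required marginals satisfies
\[
S(D)_\psi=S(ABC)_\psi=S(ABC)_\Psi=S(D)_\Psi=S(D)_{\psi'}.
\]
If $\gamma$ were mixed, then $S(D)_\psi=S(\gamma)+S(\delta)>S(D)_{\psi'}$, and no such $\Psi$ would exist.

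The triple factorization itself is also only announced. To get it, you must pull \emph{both} states back by \emph{both} isometries, i.e.\ apply $W^\dagger W'^\dagger$ to $\ket{\psi}$ and to $\ket{\psi'}$. This is legitimate because $\psi_B=\psi'_B$ is supported in the range of $W$, and $\psi_C=\psi'_C$ in the range of $W'$. The common $BC$ marginal is then simultaneously of the form $\alpha_{B_1}\otimes(\cdot)_{B_2C_1C_2}$ and of the form $(\cdot)_{B_1B_2C_1}\otimes\delta_{C_2}$, which forces the triple product.

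With these two steps added, your argument becomes the paper's. Your Uhlmann step producing $U_A$ is fine. The paper equivalently swaps the pure $AB_1$ factor using a unitary on $A$ and then maps back with $WW'$.
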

		To apply this lemma to \cref{nologicalPauli}, we set $A=\mathring{T}$, $BC=\partial T$ (divide it into a inner annulus and an outer annulus), $D=T^c$, $\ket{\psi'}=U\ket{\psi}$.
		Eq.~(\ref{eq:B1}) comes from \cref{eq:equalpartialT}; \cref{eq:B2,eq:B3} come from the disentangling lemma \cite{bravyi2010tradeoffs} and the topologically transversal gate assumption.
		
		The gluing procedure is similar to that in Ref.~\cite{kato2016information}. 
		
		\begin{proof}
			$I(A,CD)_\psi=0$ implies that there exists an isometry $V_B$ such that
			\begin{equation}
				\ket{\psi} = V_B \ket{\psi}_{AB_1}\ket{\psi}_{B_2CD}.
			\end{equation}
			Similarly, define $V_C$ via $\ket{\psi'}$ such that
			\begin{equation}
				\ket{\psi'} = V_C \ket{\psi'}_{ABC_1}\ket{\psi'}_{C_2D}.
			\end{equation}
			Define $\ket{\phi}=V_B^\dagger V_C^\dagger\ket{\psi}$ and $\ket{\phi'}=V_B^\dagger V_C^\dagger\ket{\psi'}$. 
			We have:
			\begin{align}\label{eq:defphiphi'}
				\ket{\phi} = \ket{\phi}_{AB_1}\ket{\phi}_{B_2C_1C_2D},~~~
				\ket{\phi'} = \ket{\phi'}_{AB_1B_2C_1}\ket{\phi'}_{C_2D}.
			\end{align}
			Reducing to $B_1B_2C_1C_2$, we get:
			\begin{equation}
				\phi_{B_1B_2C_1C_2}=\phi_{B_1}\otimes \phi_{B_2C_1C_2},~~~
				\phi'_{B_1B_2C_1C_2}=\phi'_{B_1B_2C_1}\otimes \phi'_{C_2}.
			\end{equation}
			Note that they are equal by assumption, hence:
			\begin{equation}
				\phi_{B_1B_2C_1C_2}=\phi'_{B_1B_2C_1C_2}=\phi_{B_1}\otimes \phi_{B_2C_1}\otimes \phi_{C_2}.
			\end{equation}
			
			We claim that $\phi_{B_2C_1}$ is pure.
			For $\ket{\phi}$, we have:
			\begin{align}
				&S(B_1)+S(B_2C_1)+S(C_2)=S(B_1B_2C_1C_2)=S(AD)_\phi 
				= S(A)_\phi + S(D)_\phi.
			\end{align}
			However, notice that $S(B_1)=S(A)_\phi$ and $S(C_2)=S(D)_{\phi'}$ due the structure \cref{eq:defphiphi'}, and $S(D)_{\phi'}=S(D)_{\phi}$ by assumption, we are forced to have $S(B_2C_1)=0$.
			
			It follows that: 
			\begin{align}
				\ket{\phi} = \ket{\phi}_{AB_1}\ket{\phi}_{B_2C_1}\ket{\phi}_{C_2D},~~~
				\ket{\phi'} = \ket{\phi'}_{AB_1}\ket{\phi}_{B_2C_1}\ket{\phi'}_{C_2D}.
			\end{align}
			To glue $\ket{\psi}$ and $\ket{\psi'}$, we simply replace $\ket{\phi'}_{AB_1}$ by $\ket{\phi}_{AB_1}$ then map the states back using $V_BV_C$.
			More precisely, for any unitary $U_{A}$ such that $\ket{\phi}_{AB_1}=U_{A}\ket{\phi'}_{AB_1}$, which always exists due to the uniqueness of purification, we define:
			\begin{equation}
				\begin{aligned}
					&\ket{\Phi} = U_A\ket{\phi}=\ket{\phi}_{AB_1}\ket{\phi}_{B_2C_1}\ket{\phi'}_{C_2D},\\
					&\ket{\Psi} = V_BV_C \ket{\Phi}=U_A\ket{\psi'}.    
				\end{aligned}
			\end{equation}
			The desired properties \cref{eq:gluereq} are then clearly satisfied.
		\end{proof}
		
		\textbf{Remark}.
		The unitary $U_A$ can be determined solely from $\phi_{AB_1}$ and $\phi'_{AB_1}$,
		hence solely from $\psi_{AB}$, $\psi'_{AB}$ and $V_B$.
		
		Another canonical choice, although not a unitary, is the Petz map $\mathcal{P}_{B\to AB}$ defined by $\psi$:
		\begin{equation}
			\mathcal{P}_{B\to AB}(\cdot)=\psi_{AB}^{\frac{1}{2}}\psi_{B}^{-\frac{1}{2}}(I_A\otimes\cdot)\psi_{B}^{-\frac{1}{2}}\psi_{AB}^{\frac{1}{2}}.
		\end{equation}
		Explicit calculation shows that $\mathcal{P}_{B\to AB}(\psi_{BCD})=\ketbra{\psi}$, $\mathcal{P}_{B\to AB}(\psi'_{BCD})=\ketbra{\Psi}$.
		
		\section{Double Fibonacci Model}\label{sec:doubleFib}
		\begin{proposition}
			The double Fibonacci model on torus does not admit locality-preserving logical gates.
		\end{proposition}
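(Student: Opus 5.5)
The plan is to run the monomiality argument from \cref{eq:Lmono}, but with explicit matrices instead of the density argument. On the torus the double Fibonacci ground space is $\calC\otimes\calC^\ast$ with $\dim\calC=2$, labelled by $\{1,\tau\}$. The modular representation here has finite image, so \cref{thm:doubleuniversal} does not apply. However, \cref{lemma:doubleuniversal} only needs monomiality for a few well-chosen conjugations, and these are supplied by the finitely many bases attached to different nontrivial loops.

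\textbf{Step 1: constrain the permutation part.} Fix the basis $\{\ket{a,\bar b}\}$ with $a,b\in\{1,\tau\}$, which diagonalizes the Verlinde algebra of the loop $C_1$. By \cref{eq:Lmono}, a LPU gate $L$ has the form $\Pi D$. The permutation $\Pi$ is induced by an automorphism of the Verlinde algebra of $C_1$, so it must be a fusion-rule automorphism of $\mathrm{Fib}\boxtimes\overline{\mathrm{Fib}}$. It must fix the vacuum $1\bar 1$, and it must preserve the set of labels with quantum dimension $\varphi$ versus $\varphi^2$. The only candidates are therefore $\Pi=I$ and the swap $\tau\bar1\leftrightarrow 1\bar\tau$ (the latter combined with the corresponding action on $\tau\bar\tau$).

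\textbf{Step 2: transformed bases.} The bases for the loops $C_2$ and $C_1+C_2$ differ from the first by $S\otimes S^\ast$ and $(ST)\otimes(ST)^\ast$, where
\begin{equation}
S=\frac{1}{\sqrt{2+\varphi}}\begin{pmatrix}1&\varphi\\ \varphi&-1\end{pmatrix},\qquad T=\mathrm{diag}(1,e^{4\pi i/5}).
\end{equation}
Both $(S\otimes S^\ast)L(S\otimes S^\ast)^\dagger$ and $(ST\otimes (ST)^\ast)L(\cdots)^\dagger$ must then be monomial. This is the same deformation argument used earlier: the loops avoid the disk region.

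\textbf{Step 3: case $\Pi=I$.} Write $D=\mathrm{diag}(d_{11},d_{1\tau},d_{\tau1},d_{\tau\tau})$. Each entry of $(S\otimes S)D(S\otimes S)$ is an explicit linear combination of the $d$'s with coefficients in $\{1,\pm\varphi,\varphi^2\}/(2+\varphi)$. Monomiality requires that in each row at least three of the four entries vanish. I will solve these linear conditions and expect them to leave only $D\propto I$, together with possibly a sporadic solution. Any sporadic solution is then killed by the $ST$ conjugation, whose nontrivial phases $e^{\pm 4\pi i/5}$ break the accidental real cancellations.

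\textbf{Step 4: case $\Pi=$ swap.} Since $S$ is real, $S\otimes S^\ast=S\otimes S$ commutes with the swap, so the $S$ conjugation alone cannot rule it out. Here I will use the $ST$ basis. Conjugating by $ST\otimes \overline{ST}$ turns the swap into a matrix with entries involving $e^{\pm 8\pi i/5}$-type phases. These cannot be absorbed by any diagonal $D$ and leave non-monomial rows. Physically, the swap exchanges $\tau$ (spin $e^{4\pi i/5}$) with $\bar\tau$ (spin $e^{-4\pi i/5}$), so it is not a braided autoequivalence, and the $T$-twisted basis detects exactly this.

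I expect Steps 3--4 to be the main obstacle. The linear algebra is small ($4\times4$), but one must verify carefully that no nonscalar $D$ survives both conjugations. If the direct check is messy, I would instead take the fallback of citing the explicit computation in Ref.~\cite{beverland2016toward} for the Fibonacci case and doubling it.
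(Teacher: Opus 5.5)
Your overall route is the paper's. You reduce $L=\Pi D$ to $\Pi\in\{I,\Pi_{\mathrm{swap}}\}$ and then impose monomiality after conjugating by $S\otimes S^\ast$ and by $ST\otimes(ST)^\ast$. However, Step 4 fails as written. You set the $S$-test aside in the swap case and claim that the $ST$-conjugate of $\Pi_{\mathrm{swap}}D$ is non-monomial for \emph{every} diagonal $D$. That is false. Take $D=(T\otimes\bar T)^2=\mathrm{diag}(1,e^{-8\pi i/5},e^{8\pi i/5},1)$ in your ordering $(11,1\tau,\tau1,\tau\tau)$. Use $\Pi_{\mathrm{swap}}(A\otimes B)=(B\otimes A)\Pi_{\mathrm{swap}}$, $S=S^T=\bar S$, and $S^2=I$ (since $1+\varphi^2=2+\varphi$). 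Then
\[
(ST\otimes\overline{ST})\,\Pi_{\mathrm{swap}}(T\otimes\bar T)^2\,(ST\otimes\overline{ST})^\dagger=(ST\bar T S\otimes S\bar T T S)\,\Pi_{\mathrm{swap}}=\Pi_{\mathrm{swap}},
\]
which is monomial. The $e^{\pm 8\pi i/5}$ phases you expected to obstruct monomiality are absorbed exactly by this $D$. So the $T$-twisted basis alone does not detect the spin mismatch between $\tau\bar 1$ and $1\bar\tau$.

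The missing step is to use the $S$-test to fix $D$ before invoking $ST$; this is what the paper does. Because $S\otimes S$ commutes with $\Pi_{\mathrm{swap}}$, you have $(S\otimes S)\Pi_{\mathrm{swap}}D(S\otimes S)^\dagger=\Pi_{\mathrm{swap}}\,(S\otimes S)D(S\otimes S)^\dagger$. Your $\Pi=I$ analysis therefore applies verbatim and forces $D\propto I$. The bare swap then gives $(ST\otimes\overline{ST})\Pi_{\mathrm{swap}}(ST\otimes\overline{ST})^\dagger=\Pi_{\mathrm{swap}}\,(\overline{ST^2S}\otimes ST^2S)$. Here $ST^2S$ has off-diagonal entry $\varphi(1-e^{8\pi i/5})/(2+\varphi)$, which is neither zero nor unimodular, so the result is not monomial.

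Step 3 also closes with $S$ alone, with no sporadic solutions and no need for $ST$. Every off-diagonal entry of $(S\otimes S)D(S\otimes S)$ has modulus at most $\sum_k|(S\otimes S)_{ik}||(S\otimes S)_{jk}|<1$. This follows from Cauchy--Schwarz, because the rows of $|S\otimes S|$ are distinct unit vectors. A unitary monomial matrix has unimodular nonzero entries, so the conjugate must be diagonal. Since every entry of $S\otimes S$ is nonzero, diagonality forces $D\propto I$. This is the paper's remark that the off-diagonal elements have modulus less than $1$.

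A minor point on Step 1: an automorphism of the Verlinde algebra $\bigoplus\mathbb{C}$ can be any permutation of the idempotents, so ``fusion-rule automorphism'' does not follow. The constraint actually used is dimension preservation, cited from \cite{beverland2016toward}, and it yields the same two candidates.
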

		The proposition is proved by combing the following two results:
		\begin{itemize}
			\item  Ref.~\cite{beverland2016protected} shows that $ULU^\dagger$ must be monomial for any locality-preserving logical gate $L$ and any modular transformation $U$;
			\item  Ref.~\cite{beverland2016toward} (sec. 3.9) shows that the permutation pattern of $L$ must preserve the quantum dimension. 
		\end{itemize}
		
		\begin{proof}
			For the double Fibonacci model, the modular transformations are generated by:
			\begin{equation}
				S=\frac{1}{2+\phi} \begin{pmatrix} 1 & \phi & \phi & \phi^2 \\ \phi & -1 & \phi^2 & -\phi \\ \phi & \phi^2 & -1 & -\phi \\ \phi^2 & -\phi & -\phi & 1 \end{pmatrix},~~
				T = \text{diag}\left( 1,  e^{i 4\pi/5},  e^{-i 4\pi/5},  1 \right),
			\end{equation}
			where $\phi=\frac{1+\sqrt{5}}{2}$ satisfies $\phi^2=\phi+1$.
			There are 4 anyon types, whose quantum dimensions are $1, \phi, \phi, \phi^2$ respectively,
			
			A locality-preserving logical gate $L$ has the form of $L=\Pi_\pi D$.
			Here, we use $\Pi_\pi$ to denote the permutation matrix corresponding to the permutation $\pi$.
			Namely, $\Pi_{i,j}=1$ iff $\pi(i)=j$, and $\Pi_{i,j}=0$ otherwise.
			To preserve the quantum dimension, only two permutations are possible: $\pi=(1)(2)(3)(4)$ or $\pi=(1)(23)(4)$.
			We parametrize $D$ as $D=\text{diag}(1,z_1,z_2,z_3)$, where $|z_i|=1$.
			
			(1) If $\pi=(1)(2)(3)(4)$.
			Computing $SLS^\dagger$, we get:
			\begin{equation}
				\footnotesize
				\frac{1}{5\phi^2}\begin{pmatrix*}
					1+\phi^{2}z_1+\phi^{2}z_2+\phi^{4}z_3 &
					\phi-\phi z_1+\phi^{3}z_2-\phi^{3}z_3 &
					\phi+\phi^{3}z_1-\phi z_2-\phi^{3}z_3 &
					\phi^{2}-\phi^{2}z_1-\phi^{2}z_2+\phi^{2}z_3 \\
					\phi-\phi z_1+\phi^{3}z_2-\phi^{3}z_3 &
					\phi^{2}+z_1+\phi^{4}z_2+\phi^{2}z_3 &
					\phi^{2}-\phi^{2}z_1-\phi^{2}z_2+\phi^{2}z_3 &
					\phi^{3}+\phi z_1-\phi^{3}z_2-\phi z_3 \\
					\phi+\phi^{3}z_1-\phi z_2-\phi^{3}z_3 &
					\phi^{2}-\phi^{2}z_1-\phi^{2}z_2+\phi^{2}z_3 &
					\phi^{2}+\phi^{4}z_1+z_2+\phi^{2}z_3 &
					\phi^{3}-\phi^{3}z_1+\phi z_2-\phi z_3 \\
					\phi^{2}-\phi^{2}z_1-\phi^{2}z_2+\phi^{2}z_3 &
					\phi^{3}+\phi z_1-\phi^{3}z_2-\phi z_3 &
					\phi^{3}-\phi^{3}z_1+\phi z_2-\phi z_3 &
					\phi^{4}+\phi^{2}z_1+\phi^{2}z_2+z_3
				\end{pmatrix*}.
				\normalsize
			\end{equation}
			Using $|z_i|=1$, we can check that all off-diagonal elements have modulus less than 1.
			For it to be a monomial matrix, it must be diagonal, hence the off-diagonal elements are strictly 0.
			Solving it, we find $z_1=z_2=z_3=1$.

			(2) If $\pi=(1)(2,3)(4)$.
			Computing $SLS^\dagger$ and solving $z_i$ as in (1), we find $z_1=z_2=z_3=1$.
			Hence $L=\Pi_\pi$.
			Now, computing $STL(ST)^\dagger$, we find that it is not monomial, a contradiction.
			
			Therefore, the only possibility is $L=id$.
		\end{proof}

	\end{appendix}
	
	\bibliographystyle{unsrt}
	\bibliography{ref.bib}

@article{chiralpaper,
    author = {Ellison, Tyler and Lee, Dongjin and Li, Zhi and Moharramipour, Amin and Panahi, Yasamin and Yoshida, Beni},
    title = {{Many-body chirality of topological stabilizer states}},
    year = {2026 (to appear)}
}

@article{andreadakis2026exact,
  title={Exact link between nonlocal nonstabilizerness and operator entanglement},
  author={Andreadakis, Faidon and Zanardi, Paolo},
  journal={Physical Review A},
  volume={113},
  number={1},
  pages={L010404},
  year={2026},
  publisher={APS}
}

@article{dur2002effective,
  title={Effective size of certain macroscopic quantum superpositions},
  author={D{\"u}r, Wolfgang and Simon, Christoph and Cirac, J Ignacio},
  journal={Physical review letters},
  volume={89},
  number={21},
  pages={210402},
  year={2002},
  publisher={APS}
}

@article{zhang2024long,
  title={Non-onsite symmetry breaking: Topological phase coexistence and criticality},
  author={Zhang, Zhehao and Li, Yabo and Lu, Tsung-Cheng},
  journal={Physical Review B},
  volume={113},
  number={12},
  pages={125123},
  year={2026},
  publisher={APS}
}

@article{smith2023deterministic,
  title={Deterministic constant-depth preparation of the AKLT state on a quantum processor using fusion measurements},
  author={Smith, Kevin C and Crane, Eleanor and Wiebe, Nathan and Girvin, SM},
  journal={PRX Quantum},
  volume={4},
  number={2},
  pages={020315},
  year={2023},
  publisher={APS}
}

@article{smith2024constant,
  title={Constant-depth preparation of matrix product states with adaptive quantum circuits},
  author={Smith, Kevin C and Khan, Abid and Clark, Bryan K and Girvin, Steven M and Wei, Tzu-Chieh},
  journal={PRX Quantum},
  volume={5},
  number={3},
  pages={030344},
  year={2024},
  publisher={APS}
}

@article{sahay2025classifying,
  title={Classifying one-dimensional quantum states prepared by a single round of measurements},
  author={Sahay, Rahul and Verresen, Ruben},
  journal={PRX Quantum},
  volume={6},
  number={1},
  pages={010329},
  year={2025},
  publisher={APS}
}

@article{stephen2025preparing,
  title={Preparing matrix product states via fusion: Constraints and extensions},
  author={Stephen, David T and Hart, Oliver},
  journal={Physical Review B},
  volume={112},
  number={23},
  pages={235127},
  year={2025},
  publisher={APS}
}

@article{zhang2024characterizing,
  title={Characterizing matrix-product states and projected entangled-pair states preparable via measurement and feedback},
  author={Zhang, Yifan and Gopalakrishnan, Sarang and Styliaris, Georgios},
  journal={PRX Quantum},
  volume={5},
  number={4},
  pages={040304},
  year={2024},
  publisher={APS}
}

@article{chen2010local,
  title={Local unitary transformation, long-range quantum entanglement, wave function renormalization, and topological order},
  author={Chen, Xie and Gu, Zheng-Cheng and Wen, Xiao-Gang},
  journal={Physical Review B—Condensed Matter and Materials Physics},
  volume={82},
  number={15},
  pages={155138},
  year={2010},
  publisher={APS}
}

@article{gottesman1998heisenberg,
  title={The Heisenberg representation of quantum computers},
  author={Gottesman, Daniel},
  journal={arXiv preprint quant-ph/9807006},
  year={1998}
}

@article{bravyi2005universal,
  title={Universal quantum computation with ideal Clifford gates and noisy ancillas},
  author={Bravyi, Sergey and Kitaev, Alexei},
  journal={Physical Review A—Atomic, Molecular, and Optical Physics},
  volume={71},
  number={2},
  pages={022316},
  year={2005},
  publisher={APS}
}

@article{veitch2014resource,
  title={The resource theory of stabilizer quantum computation},
  author={Veitch, Victor and Hamed Mousavian, SA and Gottesman, Daniel and Emerson, Joseph},
  journal={New Journal of Physics},
  volume={16},
  number={1},
  pages={013009},
  year={2014},
  publisher={IOP Publishing}
}

@article{howard2017application,
  title={Application of a resource theory for magic states to fault-tolerant quantum computing},
  author={Howard, Mark and Campbell, Earl},
  journal={Physical review letters},
  volume={118},
  number={9},
  pages={090501},
  year={2017},
  publisher={APS}
}

@article{korbany2025long,
  title={Long-range nonstabilizerness and phases of matter},
  author={Korbany, David Aram and Gullans, Michael J and Piroli, Lorenzo},
  journal={Physical Review Letters},
  volume={135},
  number={16},
  pages={160404},
  year={2025},
  publisher={APS}
}

@article{ellison2021symmetry,
  title={Symmetry-protected sign problem and magic in quantum phases of matter},
  author={Ellison, Tyler D and Kato, Kohtaro and Liu, Zi-Wen and Hsieh, Timothy H},
  journal={Quantum},
  volume={5},
  pages={612},
  year={2021},
  publisher={Verein zur F{\"o}rderung des Open Access Publizierens in den Quantenwissenschaften}
}

@article{white2021conformal,
  title={Conformal field theories are magical},
  author={White, Christopher David and Cao, ChunJun and Swingle, Brian},
  journal={Physical Review B},
  volume={103},
  number={7},
  pages={075145},
  year={2021},
  publisher={APS}
}

@article{bravyi2010tradeoffs,
  title={Tradeoffs for reliable quantum information storage in 2D systems},
  author={Bravyi, Sergey and Poulin, David and Terhal, Barbara},
  journal={Physical review letters},
  volume={104},
  number={5},
  pages={050503},
  year={2010},
  publisher={APS}
}

@article{bravyi2025much,
  title={How much entanglement is needed for quantum error correction?},
  author={Bravyi, Sergey and Lee, Dongjin and Li, Zhi and Yoshida, Beni},
  journal={Physical Review Letters},
  volume={134},
  number={21},
  pages={210602},
  year={2025},
  publisher={APS}
}

@article{bravyi2006lieb,
  title={Lieb-Robinson bounds and the generation of correlations and topological quantum order},
  author={Bravyi, Sergey and Hastings, Matthew B and Verstraete, Frank},
  journal={Physical review letters},
  volume={97},
  number={5},
  pages={050401},
  year={2006},
  publisher={APS}
}

@article{li2025much,
  title={How much entanglement is needed for topological codes and mixed states with anomalous symmetry?},
  author={Li, Zhi and Lee, Dongjin and Yoshida, Beni},
  journal={Physical Review X},
  volume={15},
  number={2},
  pages={021090},
  year={2025},
  publisher={APS}
}

@article{haah2016invariant,
  title={An invariant of topologically ordered states under local unitary transformations},
  author={Haah, Jeongwan},
  journal={Communications in Mathematical Physics},
  volume={342},
  number={3},
  pages={771--801},
  year={2016},
  publisher={Springer}
}

@article{aharonov2018quantum,
  title={Quantum circuit depth lower bounds for homological codes},
  author={Aharonov, Dorit and Touati, Yonathan},
  journal={arXiv preprint arXiv:1810.03912},
  year={2018}
}

@article{lu2022measurement,
  title={Measurement as a shortcut to long-range entangled quantum matter},
  author={Lu, Tsung-Cheng and Lessa, Leonardo A and Kim, Isaac H and Hsieh, Timothy H},
  journal={PRX Quantum},
  volume={3},
  number={4},
  pages={040337},
  year={2022},
  publisher={APS}
}

@article{verresen2021efficiently,
  title={Efficiently preparing Schrödinger's cat, fractons and non-Abelian topological order in quantum devices},
  author={Verresen, Ruben and Tantivasadakarn, Nathanan and Vishwanath, Ashvin},
  journal={arXiv preprint arXiv:2112.03061},
  year={2021}
}

@article{tantivasadakarn2024long,
  title={Long-range entanglement from measuring symmetry-protected topological phases},
  author={Tantivasadakarn, Nathanan and Thorngren, Ryan and Vishwanath, Ashvin and Verresen, Ruben},
  journal={Physical Review X},
  volume={14},
  number={2},
  pages={021040},
  year={2024},
  publisher={APS}
}

@article{tantivasadakarn2023hierarchy,
  title={Hierarchy of topological order from finite-depth unitaries, measurement, and feedforward},
  author={Tantivasadakarn, Nathanan and Vishwanath, Ashvin and Verresen, Ruben},
  journal={PRX Quantum},
  volume={4},
  number={2},
  pages={020339},
  year={2023},
  publisher={APS}
}

@article{kim2025any,
  title={Any Clifford+ T circuit can be controlled with constant T-depth overhead},
  author={Kim, Isaac H and Laakkonen, Tuomas},
  journal={arXiv preprint arXiv:2512.24982},
  year={2025}
}

@article{bravyi2022adaptive,
  title={Adaptive constant-depth circuits for manipulating non-abelian anyons},
  author={Bravyi, Sergey and Kim, Isaac and Kliesch, Alexander and Koenig, Robert},
  journal={arXiv preprint arXiv:2205.01933},
  year={2022}
}

@article{kitaev2003fault,
  title={Fault-tolerant quantum computation by anyons},
  author={Kitaev, A Yu},
  journal={Annals of physics},
  volume={303},
  number={1},
  pages={2--30},
  year={2003},
  publisher={Elsevier}
}

@article{levin2005string,
  title={String-net condensation: A physical mechanism for topological phases},
  author={Levin, Michael A and Wen, Xiao-Gang},
  journal={Physical Review B—Condensed Matter and Materials Physics},
  volume={71},
  number={4},
  pages={045110},
  year={2005},
  publisher={APS}
}

@article{bravyi2013classification,
  title={Classification of topologically protected gates for local stabilizer codes},
  author={Bravyi, Sergey and K{\"o}nig, Robert},
  journal={Physical review letters},
  volume={110},
  number={17},
  pages={170503},
  year={2013},
  publisher={APS}
}

@article{eastin2009restrictions,
  title={Restrictions on transversal encoded quantum gate sets},
  author={Eastin, Bryan and Knill, Emanuel},
  journal={Physical review letters},
  volume={102},
  number={11},
  pages={110502},
  year={2009},
  publisher={APS}
}

@article{verlinde1988fusion,
  title={Fusion rules and modular transformations in 2D conformal field theory},
  author={Verlinde, Erik},
  journal={Nuclear Physics B},
  volume={300},
  pages={360--376},
  year={1988},
  publisher={Elsevier}
}

@article{yi2025lov,
  title={Lovász Meets {Lieb-Schultz-Mattis}: Complexity in Approximate Quantum Error Correction},
  author={Yi, Jinmin and Liu, Ruizhi and Li, Zhi},
  journal={arXiv preprint arXiv:2510.04453},
  year={2025}
}

@article{nikshych2014categorical,
  title={Categorical Lagrangian Grassmannians and Brauer--Picard groups of pointed fusion categories},
  author={Nikshych, Dmitri and Riepel, Brianna},
  journal={Journal of Algebra},
  volume={411},
  pages={191--214},
  year={2014},
  publisher={Elsevier}
}

@article{anshu2020circuit,
  title={Circuit lower bounds for low-energy states of quantum code Hamiltonians},
  author={Anshu, Anurag and Nirkhe, Chinmay},
  journal={arXiv preprint arXiv:2011.02044},
  year={2020}
}

@article{wei2025long,
  title={Long-range nonstabilizerness from quantum codes, orders, and correlations},
  author={Wei, Fuchuan and Liu, Zi-Wen},
  journal={arXiv preprint arXiv:2503.04566},
  year={2025}
}

@article{muger2003subfactors,
  title={From subfactors to categories and topology II: The quantum double of tensor categories and subfactors},
  author={M{\"u}ger, Michael},
  journal={Journal of Pure and Applied Algebra},
  volume={180},
  number={1-2},
  pages={159--219},
  year={2003},
  publisher={Elsevier}
}

@phdthesis{beverland2016toward,
  title={Toward realizable quantum computers},
  author={Beverland, Michael E},
  year={2016},
  school={California Institute of Technology}
}

@article{freedman2001two,
  title={The two-eigenvalue problem and density of Jones representation of braid groups},
  author={Freedman, Michael H and Larsen, Michael J and Wang, Zhenghan},
  journal={arXiv preprint math/0103200},
  year={2001}
}

@article{larsen2005density,
  title={Density of the {SO(3) TQFT} representation of mapping class groups},
  author={Larsen, Michael and Wang, Zhenghan},
  journal={Communications in mathematical physics},
  volume={260},
  number={3},
  pages={641--658},
  year={2005},
  publisher={Springer}
}

@article{yoshida2015topological,
  title={Topological color code and symmetry-protected topological phases},
  author={Yoshida, Beni},
  journal={Physical Review B},
  volume={91},
  number={24},
  pages={245131},
  year={2015},
  publisher={APS}
}

@article{yoshida2017gapped,
  title={Gapped boundaries, group cohomology and fault-tolerant logical gates},
  author={Yoshida, Beni},
  journal={Annals of Physics},
  volume={377},
  pages={387--413},
  year={2017},
  publisher={Elsevier}
}

@article{ng2010congruence,
  title={Congruence subgroups and generalized Frobenius-Schur indicators},
  author={Ng, Siu-Hung and Schauenburg, Peter},
  journal={Communications in Mathematical Physics},
  volume={300},
  number={1},
  pages={1--46},
  year={2010},
  publisher={Springer}
}

@article{cui2020kitaev,
  title={Kitaev's quantum double model as an error correcting code},
  author={Cui, Shawn X and Ding, Dawei and Han, Xizhi and Penington, Geoffrey and Ranard, Daniel and Rayhaun, Brandon C and Shangnan, Zhou},
  journal={Quantum},
  volume={4},
  pages={331},
  year={2020},
  publisher={Verein zur F{\"o}rderung des Open Access Publizierens in den Quantenwissenschaften}
}

@article{qiu2020ground,
  title={Ground subspaces of topological phases of matter as error correcting codes},
  author={Qiu, Yang and Wang, Zhenghan},
  journal={Annals of Physics},
  volume={422},
  pages={168318},
  year={2020},
  publisher={Elsevier}
}

@article{westwick1967transformations,
  title={Transformations on tensor spaces},
  author={Westwick, Roy},
  journal={Pacific Journal of Mathematics},
  volume={23},
  number={3},
  pages={613--620},
  year={1967},
  publisher={Mathematical Sciences Publishers}
}

@article{parham2025quantum,
  title={Quantum circuit lower bounds in the magic hierarchy},
  author={Parham, Natalie},
  journal={arXiv preprint arXiv:2504.19966},
  year={2025}
}

@article{yi2024complexity,
  title={Complexity and order in approximate quantum error-correcting codes},
  author={Yi, Jinmin and Ye, Weicheng and Gottesman, Daniel and Liu, Zi-Wen},
  journal={Nature Physics},
  volume={20},
  number={11},
  pages={1798--1803},
  year={2024},
  publisher={Nature Publishing Group UK London}
}

@article{beverland2016protected,
  title={Protected gates for topological quantum field theories},
  author={Beverland, Michael E and Buerschaper, Oliver and Koenig, Robert and Pastawski, Fernando and Preskill, John and Sijher, Sumit},
  journal={Journal of Mathematical Physics},
  volume={57},
  number={2},
  year={2016},
  publisher={AIP Publishing}
}

@article{arad2013area,
  title={An area law and sub-exponential algorithm for 1D systems},
  author={Arad, Itai and Kitaev, Alexei and Landau, Zeph and Vazirani, Umesh},
  journal={arXiv preprint arXiv:1301.1162},
  year={2013}
}

@article{kato2016information,
  title={Information-theoretical analysis of topological entanglement entropy and multipartite correlations},
  author={Kato, Kohtaro and Furrer, Fabian and Murao, Mio},
  journal={Physical Review A},
  volume={93},
  number={2},
  pages={022317},
  year={2016},
  publisher={APS}
}
\end{document}